\theoremstyle{plain}
\newtheorem{theorem}{Theorem}
\newtheorem{corollary}{Corollary}
\newtheorem{lemma}{Lemma}
\newtheorem{observation}{Observation}
\newtheorem{claim}{Claim}
\newtheorem{proposition}[theorem]{Proposition}
\theoremstyle{definition}
\newtheorem{definition}{Definition}
\newtheorem{problem}[theorem]{Problem}
\title{Line-Constrained $k$-Semi-Obnoxious Facility Location}
\author{Vishwanath R. Singireddy$^1$ \and Manjanna Basappa$^1$\thanks{The authors, M. Basappa and N. R. Aravind, were partially supported by SERB TARE Grant TAR/2022/000397.} \and N. R. Aravind$^{2\hspace{0.3mm}*}$}
\date{
	$^1$BITS Pilani, Hyderabad Campus, India \\ \texttt{\{p20190420, manjanna\}@hyderabad.bits-pilani.ac.in}\\%
	$^2$IIT Hyderabad, India\\ \texttt{aravind@cse.iith.ac.in}\\[2ex]%
}
\begin{document}
	\maketitle
	
	\begin{abstract}
		Suppose we are given a set $\cal B$ of blue points and a set $\cal R$ of red points, all lying above a horizontal line $\ell$, in the plane. Let the weight of a given point $p_i\in {\cal B}\cup{\cal R}$ be $w_i>0$ if $p_i\in {\cal B}$ and $w_i<0$ if $p_i\in {\cal R}$, $|{\cal B}\cup{\cal R}|=n$, and $d^0$($=d\setminus\partial d$) be the interior of any geometric object $d$. We wish to pack $k$ non-overlapping congruent disks $d_1$, $d_2$, \ldots, $d_k$ of minimum radius, centered on $\ell$ such that $\sum\limits_{j=1}^k\sum\limits_{\{i:\exists p_i\in{\cal R}, p_i\in d_j^0\}}w_i+\sum\limits_{j=1}^k\sum\limits_{\{i:\exists p_i\in{\cal B}, p_i\in d_j\}}w_i$ is maximized, i.e., the sum of the weights of the points covered by $\bigcup\limits_{j=1}^kd_j$ is maximized. Here, the disks are the obnoxious or undesirable facilities generating nuisance or damage (with quantity equal to $w_i$) to every demand point (e.g., population center) $p_i\in {\cal R}$ lying in their interior. In contrast, they are the desirable facilities giving service (equal to $w_i$) to every demand point $p_i\in {\cal B}$ covered by them. The line $\ell$ represents a straight highway or railway line. These $k$ semi-obnoxious facilities need to be established on $\ell$ to receive the largest possible overall service for the nearby attractive demand points while causing minimum damage to the nearby repelling demand points. We show that the problem can be solved optimally in $O(n^4k^2)$ time. Subsequently, we improve the running time to $O(n^3k \cdot\max{(n, k)})$. Furthermore, we addressed two special cases of the problem where points do not have arbitrary weights. In the first case, the objective is to encompass the maximum number of blue points while avoiding red points. The second case aims to encompass all the blue points with the minimum number of red points covered. We show that these two special cases can be solved in $O(n^3k\cdot\max{(\log{n},k)})$ time. For the first case, when $k=1$, we also provide an algorithm that solves the problem in $O(n^3)$ time, and subsequently, we improve this result to $O(n^2\log{n})$. For the latter case, we give $O(n\log{n})$ time algorithm that uses the farthest point Voronoi diagram. The above-weighted variation of locating $k$ semi-obnoxious facilities may generalize the problem that Bereg et al. (2015) studied where $k=1$ i.e., the smallest radius maximum weight circle is to be centered on a line. Furthermore, we consider a generalization of the weighted problem where we are given $t$ horizontal lines instead of one line. We give an $O(n^4k^2t^5)$ time algorithm for this problem. Finally, we consider a discrete variant where a set of $s$ candidate sites (in convex position) for placing $k$ facilities is pre-given ($k<s$). We propose an algorithm that runs in $O(n^2s^2+ns^5k^2)$ time for this discrete variant.
  
  
		\noindent\textbf{Keywords:} Semi-obnoxious Facility Location, Complete Weighted Directed Acyclic Graph, Minimum-weight $k$-link Path, Concave Monge Property, Farthest-point Voronoi Diagram, Delaunay Triangulation, Dynamic Programming
	\end{abstract}

	
	\section{Introduction}
	\label{sec:intro}
 Given a set $\cal B$ of blue points and a set $\cal R$ of red points above a horizontal line $\ell$, each point $p_i \in {\cal B} \cup {\cal R}$ has a weight $w_i > 0$ if $p_i \in {\cal B}$ and $w_i < 0$ if $p_i \in R$. Let $|{\cal B} \cup {\cal R}|=n$, and let $d^0$ denote the interior of any geometric object $d$, i.e., $d^0=d\backslash {\partial d}$ where $\partial d$ denotes the boundary of $d$. The objective is to pack $k$ non-overlapping congruent disks $d_1, d_2, \dots, d_k$ of minimum radius, centered on $\ell$, such that $\sum\limits_{j=1}^{k}\sum\limits_{i\in [n]: \exists p_i\in {\cal  R}, p_i\in d^0_j} w_i + \sum\limits_{j=1}^{k}\sum\limits_{i\in [n]: \exists p_i\in {\cal  B}, p_i\in d_j} w_i$ is maximized, i.e., the sum of the weights of the points covered by $\bigcup_{j=1}^{k}d_j$ is maximized. We name this problem a Constrained Semi-Obnoxious Facility Location (\textsc{CSofl}) problem on a Line. 
 
 Typically, facility location problems involve two types of facilities: desirable ones like hospitals, fire stations, and post offices that should be located as close as possible to demand points (population centers), and undesirable ones like chemical factories, nuclear plants, and dumping yards that should be located as far away as possible from demand points to minimize their negative impact. However, the semi-obnoxious facility location \textsc{(Sofl)} problems have the unified objective of optimizing both negative and positive impacts on the repelling and attractive demand sites, respectively. In \textsc{Sofl} problems, the aim is to locate facilities at an optimal distance from both attractive and repulsive demand points. This creates a bi-objective problem where two objectives must be balanced. For example, when building an airport, it should be located far enough from the city to avoid noise pollution but close enough to customers to minimize transportation costs.

 In \cite{ma1994}, the semi-obnoxious facility location problem is modeled as Weber's problem, where the repulsive points are assigned with negative weights. This problem is solved by designing a branch and bound method with the help of rectangular subdivisions \cite{ma1994}. 
The problem of locating multiple capacitated semi-obnoxious facility location problem is solved using a bi-objective evolutionary strategy algorithm where the objective is to minimize both non-social and social costs \cite{te2019}.
The problem of locating a single semi-obnoxious facility within a bounded region is studied by constructing efficient sets (the endpoints of the efficient segments) \cite{me2003}.
A bi-objective mixed integer linear programming formulation was introduced and applied to this semi-obnoxious facility location problem \cite{co2013}.

Golpayegani et al. \cite{go2014} introduced a semi-obnoxious median line problem in the plane using Euclidean norm and proposed a particle swarm optimization algorithm. Later, Golpayegani et al. \cite{go2017} proposed a particle swarm optimization that solved the rectilinear case of the semi-obnoxious median line problem. Recently, Gholami and Fathali \cite{gh2021} solved the circular semi-obnoxious facility location problem in the Euclidean plane using a cuckoo optimization algorithm which is known to be a meta-heuristic method.
The problem of locating a single semi-obnoxious facility within a bounded region is studied by constructing efficient sets.
Wagner \cite{wa2015} gave duality results in his thesis for a non-convex single semi-obnoxious facility location problem in the Euclidean space. Singireddy and Basappa \cite{vi2021,sing2022} studied the $k$ obnoxious facility location problem restricted to a line segment. They initially proposed an $(1-\epsilon)$-approximation algorithm \cite{vi2021}, and then two exact algorithms based on two different approaches that run in $O((nk)^2)$ and $O((n+k)^2)$ time, respectively, for any $k>0$  and finally, an $O(n\log{n})$ time algorithm for $k=2$ \cite{sing2022}. In \cite{vi2021}, they also  examined the weighted variant of the problem (where the influencing range of obnoxious facilities is fixed and demand points are weighted). They gave a dynamic programming-based solution that runs in $O(n^3k)$ time for this variant. Subsequently, Zhang \cite{zhang2022} refined this result to $O(nk\alpha(nk)\log^3{nk})$ by reducing the problem to the $k$-link shortest path problem on a complete, weighted directed acyclic graph whose edge weights satisfy the convex Monge property, where $\alpha(\cdot)$ refers to the inverse Ackermann function. Section \ref{sec4} of this paper follows  a similar strategy of reducing the weighted \textsc{CSofl} to the $k$-link path problem, but the edge weights satisfy the concave Monge property. 

The \textsc{Sofl} problem is also closer to the class of geometric separability problems, where we need to separate the given two sets of points with a linear or non-linear boundary or surface in a high-dimensional space. Geometric separability is an important concept in machine learning and pattern recognition. It is used to determine whether a set of data can be classified into distinct categories (say, good and bad points type) using a specific algorithm or model. Then, given an arbitrary data point, we can predict whether this point is good or bad depending on which side of the separation boundary hyperplane it falls. O'Rourke et al. \cite{o1986} gave a linear time algorithm based on linear programming to check whether a circular separation exists. They also showed that the smallest disk and the largest separating circle could be found in $O(n)$ and $O(n\log {n})$ time, respectively. If a convex polygon with $k$ sides separation exists, then for $k=\Theta(n)$, the lower bound for computing the minimum enclosing convex polygon with $k$ sides is $\Omega(n\log{n})$ \cite{E1988} and can be solved in $O(nk)$ time. While the separability problem using a simple polygon \cite{fek1992} was shown to be {\tt NP-hard}, Mitchell \cite{mit1993} gave $(\log{n})$-approximation algorithm for an arbitrary simple polygon. Recently, Abidha and Ashok \cite{abidha2022} have explored the geometric separability problems by examining rectangular annuli with fixed (the axis-aligned) and arbitrary orientation, square annuli with a fixed orientation, and an orthogonal convex polygon. For rectangular annuli with a fixed orientation, they gave $O(n\log {n})$ time algorithm. They gave $O(n^2\log {n})$ time algorithm for cases with arbitrary orientation. For a fixed square case, the running time of their algorithm is $O(n\log^2{n})$, while for the orthogonal convex polygonal cases, it is $O(n\log{n})$ time.
	
\section{Preliminaries}
This section briefly introduces various notations and definitions that will be used in further sections.   

Let ${\cal L}_{\textsc{can}}$ denote the set of candidate radii and $r_{\textsc{can}}\in {\cal L}_{\textsc{can}}$ a candidate radius. The optimal radius is denoted as $r_{opt}$. Let $dist(u,v)$ denote the minimum Euclidean distance between two points $u$ and $v$.
Given a graph $G(V,E)$, the weight of an edge is denoted as $w(i,j)$ where $\overline{ij}\in E$. The path between any two vertices $v,u\in V$ is denoted as $\Pi(v,u)$.


\begin{definition}{\textbf{Configurations}:}
Arrangement of $k$-disks in any feasible solution  to the \textsc{CSofl} problem with different placements of red and blue points on the boundaries of the disks (critical regions). A configuration is said to be critical if it corresponds to some candidate radius $r_{\textsc{can}}\in {\cal L}_{\textsc{can}}$.
\end{definition}

\begin{definition}\textbf{DAG:}
   It stands for Directed Acyclic Graph. It is a directed graph that has no directed cycles. In other words, it is a graph consisting of a set of nodes connected by directed edges, where the edges have a specific direction. There is no way to start at any node and follow a sequence of edges that eventually loops back to that node. 
\end{definition}

\begin{definition} \textbf{The minimum weight $k$-link path:}
    Given a complete weighted DAG $G(V,E)$, and two vertices $s$ (source) and $t$ (target), the minimum weight $k$-link path problem seeks to find a minimum weight path from $s$ to $t$ such that the path has exactly $k$ edges (links) in it.
\end{definition}

\begin{definition}
    \textbf{Concave Monge property:} The weight function $w$ for a given weighted, complete DAG $G(V,E)$ satisfies the concave Monge property if for all $i,j\in V$ we have the inequality $w(i,j)+w(i+1,j+1)\leq w(i,j+1)+w(i+1,j)$ satisfied, where $1<i+1<j<n$.
\end{definition}

\noindent The outline of the algorithm for the \textsc{CSofl} problem is as follows:
\begin{enumerate}
\item First, all possible configurations of the $k$ disks and red and blue points in any feasible solution to the \textsc{CSofl} problem are identified. We show that a finite number of distinct configurations exist, and there are specifically $O(1)$ distinct critical-configuration types. 
\item The next step entails computing all possible candidate radii, ${\cal L}_{\textsc{can}}$, where we have one ${r}_{\textsc{can}}$ corresponding to each of the configurations identified in the previous step.
\item After obtaining a candidate radius $r_{\textsc{can}}$, the given instance of the \textsc{CSofl} problem will be transformed into an instance of the problem of computing a minimum weight $k$-link path problem on a complete weighted DAG $G$.
\item The semi-obnoxious facilities (disks) should then be positioned (i.e., the centers of these disks are to be positioned) at the points on $\ell$ corresponding to vertices of the aforementioned minimum weight $k$-link path $\Pi_k^*(s,t)$ in $G$. The total weight of the points covered by these facilities can be computed using the $\Pi_k^*(s,t)$ weight.
\item To determine the set of all radii ${\cal L_{\textsc{can}}}=\{\lambda_1,\lambda_2,\dots \}$ for which the total weight of the covered points is the largest, the above process must be repeated for every candidate radius $r_{\textsc{can}}$.
\item Finally, the locations of the $k$ semi-obnoxious facilities placed with the smallest $\lambda\in {\cal L}_{\textsc{can}}$ and covering the points with the largest total weight is returned as the output.
\end{enumerate}

\section{Computing the candidate radii}\label{sec3}
In this section, we find all the candidate radii by considering all configurations involving disks as well as blue and red points.

\begin{itemize}
\setlength{\itemindent}{0.92 in}
 \item [\textbf{\textit{Configuration}-0:}] Suppose that all the red points lie closer to $\ell$ than the blue points and have significantly more negative weights than the blue points (see Figure \ref{fig:oo}). In this scenario, covering any blue points would also cover some red points since the disks must be centered on $\ell$. This, in turn, would result in a negative total weight. As a result, we can opt to keep zero-radius disks that do not cover any points rather than covering any of the blue points. This way, the maximum weight will be zero. It also implies the following observation.

 \begin{observation}
  An optimal (feasible) solution always exists for any given problem instance.
 \end{observation}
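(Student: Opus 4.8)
The plan is to split the claim into two parts: feasibility (the set of admissible placements is non-empty) and attainment of the lexicographic optimum (some placement maximizes the total covered weight and, among those, minimizes the radius).

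Feasibility is immediate. For any $r\ge 0$ one obtains $k$ congruent, pairwise non-overlapping disks of radius $r$ by placing their centers on $\ell$ at consecutive distance $\ge 2r$ (for centers collinear on $\ell$, disjointness of interiors is exactly this spacing condition). Taking $r=0$ — the zero-radius disks of \textit{Configuration}-0 — already gives a feasible placement; since ${\cal B}\cup{\cal R}$ lies strictly above $\ell$, this placement covers no point, so its objective value is $0$. Hence the feasible set is non-empty and the optimal objective value is at least $0$.

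For attainment, note first that the objective value of any placement is $\sum_{i\in S}w_i$ over the covered set $S\subseteq[n]$, so it ranges over the finite set of subset-sums of $\{w_1,\dots,w_n\}$ and is bounded above by $\sum_{p_i\in{\cal B}}w_i$; therefore its supremum $W^*$ is attained by some placement. It remains to attain the smallest radius $r^*$ among placements of value $W^*$. I would argue by compactness: take placements of value $W^*$ whose radii tend to $r^*$; the disks covering at least one point eventually have radius $\le r^*+1$ and center within distance $r^*+1$ of a fixed data point, hence lie in a bounded region, while the disks covering nothing can be ignored (they may be re-inserted at radius $r^*$ far to the right, spaced by $2r^*$). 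Passing to a subsequence along which the radii and these centers converge and along which the covered set is constant, and then to the limit, the inequalities $dist(c_j,p_i)\le r$ (a blue point in a closed disk), $dist(c_j,p_i)\ge r$ (a red point outside an open disk), and $|c_i-c_j|\ge 2r$ (non-overlap) all pass to the limit. Thus the limiting placement is feasible, has radius $r^*$, covers every blue point covered along the subsequence, and covers no red point not covered along it; its value is therefore $\ge W^*$, hence equals $W^*$ by maximality of $W^*$, giving the desired optimum.

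The one point requiring care is the asymmetric treatment of boundaries — blue points count in the closed disk, red points only in the open disk — so in the limiting step one must verify that a boundary incidence is never converted into a covered red point nor used to uncover a blue point; tracking the non-strict versus strict distance inequalities above is precisely what makes this work. (Alternatively, once Section~\ref{sec3} shows there are only $O(1)$ critical configuration types and hence finitely many candidate radii, attainment of $r^*$ is immediate, since the value-$W^*$ radii form a non-empty finite subset of ${\cal L}_{\textsc{can}}\cup\{0\}$, which has a least element.)
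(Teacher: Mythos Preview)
Your proposal is correct, and in fact considerably more careful than the paper's own treatment. In the paper the observation is placed immediately after \textit{Configuration}-0 and is justified only by the remark that one ``can opt to keep zero-radius disks that do not cover any points''; that is, the paper establishes merely that the feasible set is non-empty (your first paragraph) and leaves attainment of the lexicographic optimum implicit.

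What you add is the attainment argument: the objective ranges over a finite set of subset-sums, so $W^*$ is realized; then a compactness/limit argument (separating disks that cover a point from those that do not, and exploiting the closed/open asymmetry for blue versus red coverage so that all the relevant inequalities are non-strict) shows the infimum radius $r^*$ among value-$W^*$ placements is realized as well. This is a genuine strengthening of what the paper actually proves at that point. Your parenthetical alternative---use the finiteness of ${\cal L}_{\textsc{can}}$ from Section~\ref{sec3}---is closer in spirit to how the paper ultimately reasons about radii, but note that invoking it where the observation sits would be a forward reference, so your self-contained compactness route is the cleaner justification here.
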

 
 \begin{figure}[!htb]
    \centering
    \includegraphics[width=.7\textwidth]{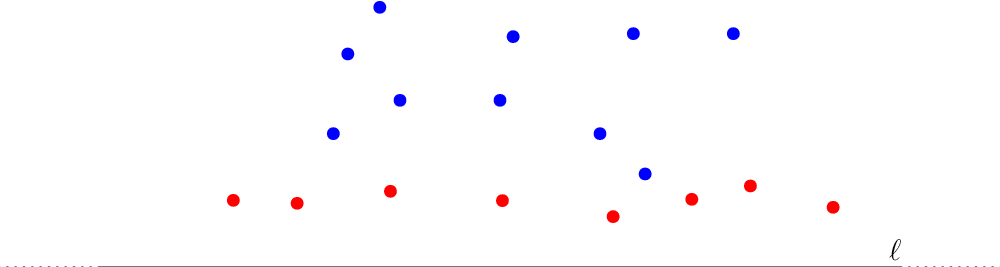}
    \caption{Configuration-0.}
    \label{fig:oo}
\end{figure}

 \item [\textbf{\textit{Configuration}-1:}] We consider a specific configuration in which the radius of the disks in the optimal solution is determined by only one blue point. As shown in Figure \ref{fig:op}, we can observe that the disks $d_i$ and $d_{i+1}$, which have one blue point on each of their boundaries, will have a smaller (optimal) radius compared to the dotted disk $d_i'$, which also covers the same blue points. This is because our problem is to find the minimum radius disks that cover the maximum weight.
  \begin{figure}[!htb]
    \centering
    \includegraphics[width=.7\textwidth]{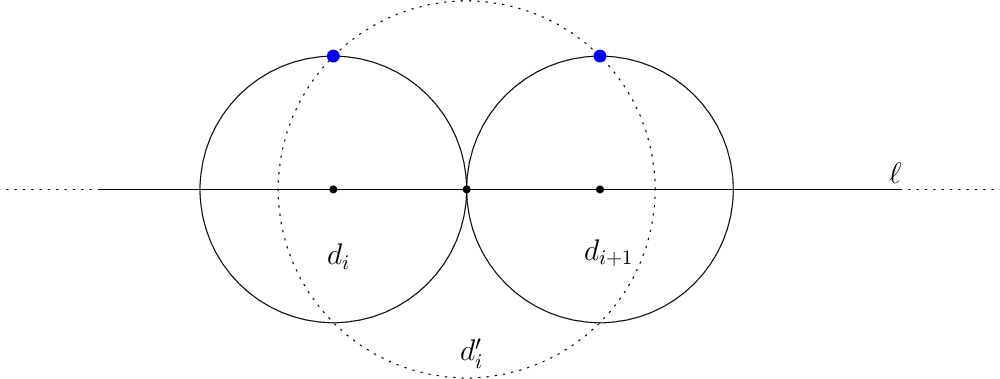}
    \caption{Configuration-1.}
    \label{fig:op}
\end{figure}
Here, we consider at least one of the blue points lying on the boundary of at least one of the $k$ disks, which determines the radius of the disks in the optimal solution for a radius greater than zero. Observe that the radius of the disk is the $y$-coordinate value of the point that lies on the disk boundary. Hence, we add $O(n)$ candidate radii to ${\cal L}_{\textsc{can}}$ and the radii are $r_{\textsc{can}}=y_{p_i}$, where $y_{p_i}$ denotes the $y$-coordinate of the point $p_i$ for each $p_i\in {\cal B}$, $i=1,2,\dots,n$. 
 
 
 \item [\textbf{\textit{Configuration}-2:}] In this scenario, we consider the case where the optimal solution is determined by two points on the boundary of at least one of the $k$ disks, which can either be two blue points or one blue and one red point. We notice that no two red points on any disk's boundary will determine the disk's radius, as we can further reduce the disk's radius until its boundary touches at least one of the blue points. To calculate the candidate radii for the disks, we proceed as follows:
 

 

Consider a point $p_i\in {\cal B}$ and a point $p_j\in {\cal R}$. If they determine the minimum radius of the disks in a solution to \textsc{CSofl} problem, then the candidate radius $r_{\textsc{can}}$ can be computed by drawing a bisector line $\ell_{i,j}$ until $\ell_{i,j}$ cuts across $\ell$ where $p_i$ is in the counter-clockwise direction from $\ell_{i,j}$ and $p_j$ in the clockwise direction from $\ell_{i,j}$ (see Figure \ref{fig:lij}).

\begin{figure}[!htb]
    \centering
    \includegraphics[width=.7\textwidth]{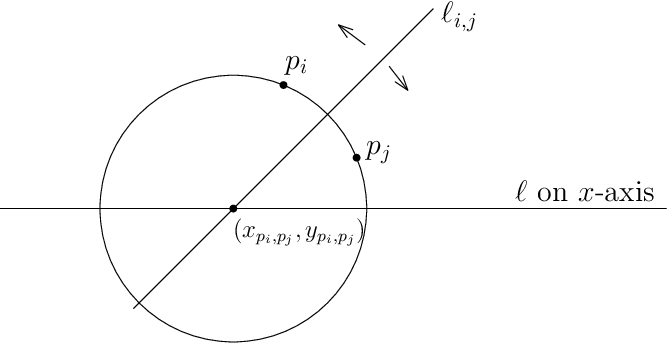}
    \caption{Illustration of calculating $r_{\textsc{can}}$.}
    \label{fig:lij}
\end{figure}

Let $(x_{p_i,p_j},y_{p_i,p_j})$ be the center of the disk, $(x_{p_i},y_{p_i})$ and  $(x_{p_j},y_{p_j})$ are the coordinates of the points $p_i$ and $p_j$ respectively. Then, we have \[(x_{p_i}-x_{p_i,p_j})^2+(y_{p_i}-y_{p_i,p_j})^2=(x_{p_j}-x_{p_i,p_j})^2+(y_{p_j}-y_{p_i,p_j})^2\]
After simplification, we have $r_{\textsc{can}}=\sqrt{(x_{p_i}-x_{p_i,p_j})^2+y_{p_i}^2}$ for the two cases:
\begin{itemize}
    \item $p_i\in {\cal B}$ and $p_j\in {\cal R}$.
    \item $p_i,p_j\in {\cal B}$.
\end{itemize}
where $x_{p_i,p_j}=\frac{(y_{p_j}-y_{p_i})(y_{p_j}+y_{p_i})}{2(x_{p_j}-x_{p_i})}+\frac{(x_{p_j}+x_{p_i})}{2}$, $y_{p_i,p_j}=0$. Here, as we consider a candidate radius for every pair of points except the red-red pair, we add $O(n^2)$ candidate radii to ${\cal L}_{\textsc{can}}$ for this critical-configuration type.
\end{itemize}
\begin{observation} \label{ob2}
    There are only $O(1)$ critical configuration types.
\end{observation}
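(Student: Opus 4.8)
The plan is to argue that in any critical configuration — i.e., one whose defining incidences pin down a candidate radius — the radius is determined by at most two points lying on the boundaries of the $k$ disks, together with at most one tangency between two adjacent disks, and that all such "determining patterns'' fall into a constant number of combinatorial types. The key observation is that a disk of radius $r$ centered on $\ell$ has its center's $x$-coordinate and the value $r$ as the only two degrees of freedom, so fixing the radius across all $k$ congruent disks is a one-parameter condition; hence a single generic incidence already determines $r$.

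First I would enumerate, as in Configurations $0$, $1$, and $2$ above, how the radius can be forced. In Configuration-1 a single blue point lies on the boundary of some disk, and since the disk is centered on $\ell$ we get $r_{\textsc{can}} = y_{p_i}$ directly — one type. In Configuration-2 two points (blue–blue or blue–red, never red–red, by the shrinking argument already given) lie on the boundary of a common disk, yielding $r_{\textsc{can}} = \sqrt{(x_{p_i}-x_{p_i,p_j})^2 + y_{p_i}^2}$ — one more type, with two sub-cases. The remaining possibility is that the radius is forced by the requirement that two adjacent disks be tangent while each carries one boundary point; but two tangent congruent disks centered on $\ell$ have centers at distance $2r$, so together with one point on one of the two boundaries this again becomes a one-parameter system solvable for $r$, giving a constant number of further types. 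Configuration-0 (all zero-radius disks) contributes the degenerate type $r = 0$.

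Next I would check that no other "critical'' situation arises: any configuration with more incidences is an over-determined specialization of one of the above (it still yields one of the already-listed radii), and any configuration with fewer incidences does not fix $r$ and hence is not critical by definition. Since the number of disks on which the determining incidences sit is at most two and the type of each incidence (blue-on-boundary, red-on-boundary, disk–disk tangency) is drawn from a fixed finite menu, the total number of critical-configuration \emph{types} — as opposed to the number of \emph{instances}, which is the $O(n^2)$ counted above — is $O(1)$.

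The main obstacle I anticipate is being careful about which multi-point or tangency patterns genuinely determine a \emph{common} radius for all $k$ disks versus those that only constrain one disk: one must rule out, or absorb into the listed types, hybrid patterns (e.g. one disk pinned by a blue point and a neighboring disk pinned by a red point with a tangency between them) by showing each such pattern reduces algebraically to one of the $O(1)$ equations for $r$. Once that case analysis is closed, the bound $O(1)$ on the number of types follows immediately, and consequently $|{\cal L}_{\textsc{can}}| = O(n^2)$.
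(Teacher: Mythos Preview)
Your argument follows the same strategy as the paper's: enumerate the incidence patterns that can pin down the common radius and observe that the list is finite. The paper's proof lists exactly four types (Configuration-0, a single blue point on some boundary, two blue points on one boundary, one blue and one red on one boundary), justified by the remark that a disk centered on $\ell$ is uniquely determined by two boundary points, and dispatches any richer pattern by ``perturbing the center or reducing the radius.'' You reproduce these but go further by including the case in which two adjacent disks are tangent and each carries a boundary point --- a pattern the paper's proof does not mention at all. Since this still yields only $O(1)$ types, both arguments establish the observation; your added case is not idle, though: take $k=2$ and two blue points at the same height $y_0$ whose $x$-separation lies strictly between $0$ and $2y_0$ --- the minimum radius covering both with disjoint disks then occurs precisely when the two disks are tangent, and that radius equals neither $y_0$ nor the Configuration-2 value for the pair. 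So your enumeration is, if anything, more complete than the paper's, even though for the bare $O(1)$ claim the distinction is immaterial.
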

\begin{proof}
    In any of the optimal placements of the disks, either one blue point or a pair of blue points, or a pair of red and blue points will determine the disk's radius. Even though there may be more than two points lying on the boundary of the disks in any optimal packing, only two points among them will determine the radius of the disk since there exists a unique disk that passes through two points and is centered on $\ell$. Hence, any configurations of points lying on the boundary of the disk can be transformed into any of the above-mentioned configurations by perturbing the center or reducing the radius of the disks. Therefore, we have a constant number of critical configuration types, namely, four types, including the configuration-0.
\end{proof}
\begin{lemma} \label{lemma7}
    $|{\cal L}_{\textsc{can}}|=O(n^2)$. 
\end{lemma}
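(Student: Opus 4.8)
\textbf{Proof proposal for Lemma \ref{lemma7}.}
The plan is to simply aggregate the counts of candidate radii contributed by each critical-configuration type, using Observation \ref{ob2} to guarantee that this enumeration is exhaustive. First I would recall that, by Observation \ref{ob2}, every optimal placement of the $k$ disks can be perturbed (by shifting a center along $\ell$ or shrinking a radius) until its defining point set falls into one of exactly four critical-configuration types: Configuration-$0$, Configuration-$1$, or one of the two sub-cases of Configuration-$2$ (blue--blue or blue--red). Hence ${\cal L}_{\textsc{can}}$ is the union of the radius sets produced by these $O(1)$ types, and it suffices to bound each summand.

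Next I would tally the contributions. Configuration-$0$ contributes only the degenerate radius $0$, i.e.\ $O(1)$ radii. Configuration-$1$ contributes the radii $r_{\textsc{can}} = y_{p_i}$, one per blue point $p_i \in {\cal B}$, hence $O(n)$ radii. Configuration-$2$ contributes one radius $r_{\textsc{can}} = \sqrt{(x_{p_i}-x_{p_i,p_j})^2 + y_{p_i}^2}$ for each unordered pair $\{p_i,p_j\}$ that is \emph{not} a red--red pair (blue--blue or blue--red), and since there are at most $\binom{n}{2} = O(n^2)$ such pairs, this type contributes $O(n^2)$ radii. Summing over the four types gives $|{\cal L}_{\textsc{can}}| = O(1) + O(n) + O(n^2) = O(n^2)$, which is the claim.

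There is no genuine obstacle left once Observation \ref{ob2} is in hand: the only point requiring care is making sure the case analysis is complete, i.e.\ that no optimal disk configuration can force a radius outside these three families. This is precisely the content of the perturbation argument in the proof of Observation \ref{ob2}: any disk centered on $\ell$ is determined by at most two points on its boundary, and if those two are both red the radius can be shrunk until a blue point is hit, so the ``generic'' defining pair is always blue--blue or blue--red (or a single blue point, or none at all). I would state the lemma's proof as a one-paragraph corollary of that observation, explicitly writing out the $O(1) + O(n) + O(n^2)$ bookkeeping so the reader can see where the dominant $O(n^2)$ term comes from.
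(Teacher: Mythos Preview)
Your proposal is correct and follows essentially the same approach as the paper: invoke Observation \ref{ob2} to reduce to the four critical-configuration types and then observe that the pair-based configurations dominate with $O(n^2)$ candidate radii. Your write-up is slightly more explicit in the $O(1)+O(n)+O(n^2)$ bookkeeping than the paper's, but the argument is the same.
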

\begin{proof}
    It follows from Observation \ref{ob2} since a constant number of critical configuration types (viz. no points, one blue point, a pair of blue points, and a pair of blue and red points) contribute to the candidate radii. In any of the configurations, at most, two points will determine the radius of the disks. Hence we have $O(n^2)$ candidate radii corresponding to that configuration. Thus, the lemma follows.
\end{proof}
\section{Transformation to the minimum weight $k$-link path problem}\label{sec4}
In this section, we demonstrate that the \textsc{CSofl} problem can be reduced to the problem of computing a minimum weight $k$-link path between a pair of vertices in a weighted DAG $G(V,E)$. Each edge $\overline{ij}\in E$ in $G$ is assigned a weight $w: (i,j)\rightarrow \mathbb{R}$ that is either a positive or negative real number $w(i,j)\in \mathbb{R}$.

The minimum weight $k$-link path $\Pi(s,t)$ is a path from the source $s$ to the target vertex $t$, consisting of exactly $k$ edges, and has the minimum total weight among all $k$-link paths between $s$ and $t$, where the weight of a $k$-link path is the sum of weights of the edges in the path, i.e., \[w(\Pi(s\rightarrow i_1 \rightarrow i_2 \rightarrow \dots \rightarrow i_{k-1}\rightarrow t))=\sum\limits_{j=1}^{k-2}w(i_j,i_{j+1})+w(s,i_1)+w(i_{k-1},t)\]

Let $\lambda=r_{\textsc{can}}$. Next, we transform an instance of the \textsc{CSofl} problem to an instance of $k$-link path problem on a DAG $G(V,E)$ as follows:

Let us call the maximal interval $f_i^+=[l_i,r_i]$ on $\ell$ as the influence interval (within which a facility or a disk with radius $r_{\textsc{can}}$ centered will influence or cover the point $p_i$) for the point $p_i\in {\cal B}$ if the distance between any point on $[l_i,r_i]$ and $p_i$ is at most $r_{\textsc{can}}$. Similarly, $f_i^-=[l_i,r_i]$ is the influence interval on $\ell$ for $p_i\in {\cal R}$.

Let the set of all influence intervals be $F=\{f_i^+ \mid i\in [n], p_i\in {\cal B}\}\cup \{f_i^- \mid i\in [n], p_i\in {\cal R}\}$. Let the vertex set $V=\{l_1,r_1,l_2,r_2,\dots,l_n,r_n\}$ be the end points of the intervals in $F$. For each $l_i, i\in [n]$, we also add $2(k-1)$ extra vertices corresponding to points on $\ell$ at distance $l_i+2\lambda,l_i+4\lambda,\dots,l_i+2(k-1)\lambda, l_i-2\lambda, l_i-4\lambda,\dots,l_i-2(k-1) \lambda$, to $V$. Similarly, we add $2(k-1)$ vertices for every $r_i, i\in [n]$, placed at points at distance $r_i-2\lambda,r_i-4\lambda,\dots,r_i-2(k-1)\lambda, r_i+2 \lambda, r_i+4\lambda,\dots,r_i-2(k-1) \lambda$.
The addition of these extra $2(k-1)$ points on the sides of both endpoints of each influence interval is because it is possible to have disks centered at points on $\ell$ other than the endpoints of influence intervals in an optimal solution (see Figure \ref{fig:kmo} for an illustration). However, at least one disk must be centered at an endpoint in any optimal solution. In Figure \ref{fig:kmo}, we can observe that the disks $d_{i-1}$ and $d_{i+1}$ are not centered at any endpoint of the intervals in $F$ since none of the points in ${\cal B}\cup {\cal R}$ lie on their boundary.
\begin{figure}[!htb]
    \centering
    \includegraphics[width=\textwidth]{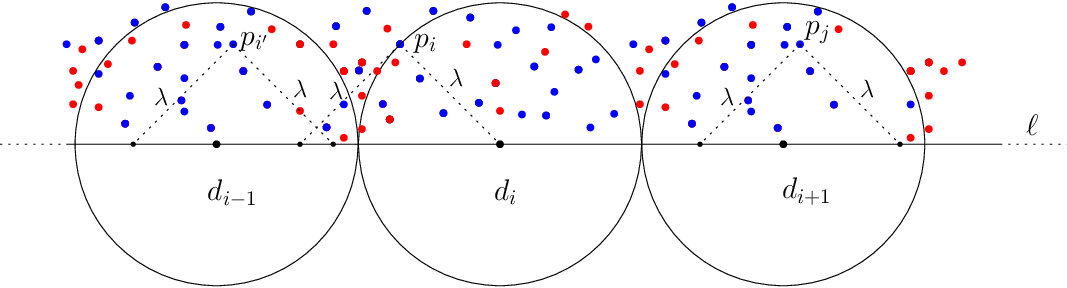}
    \caption{An optimal packing of $3$ disks for a candidate radius $\lambda$, $d_i$ is centered at an endpoint of the influence interval due to $p_i$.}
    \label{fig:kmo}
\end{figure}

Without loss of generality, the vertices may be relabeled as $V=\{v_1,v_2,\dots,v_m\}$ based on the increasing order of $x$-coordinates of all $l_i$ and $r_i$ for $i\in [n]$, and the extra added points, where $m=O(kn)$. Furthermore, we can update $V$ so that all the corresponding points in $V$ have distinct $x$-coordinates. Let $s$ and $t$ be the points placed on $\ell$ at a distance of $2k\lambda$ from the left endpoint of the leftmost interval $l_1$ and from the right endpoint of the rightmost interval $r_{\textsc{l}}$, respectively, where $[l_{\textsc{l}},r_{\textsc{l}}]$ denotes the rightmost influence interval (see Figure \ref{fig:st}). Note that $s$ lies on the left of all the points in $V$, and $t$ lies on the right of all the points in $V$.

\begin{figure}[!htb]
    \centering
    \includegraphics[width=\textwidth]{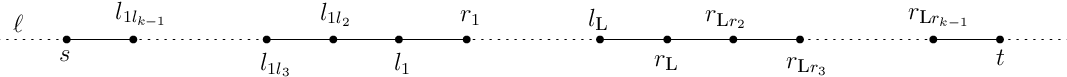}
    \caption{Adding of extra-points including $s$ and $t$.}
    \label{fig:st}
\end{figure}

\begin{figure}[!htb]
    \centering
    \includegraphics[width=0.8\textwidth]{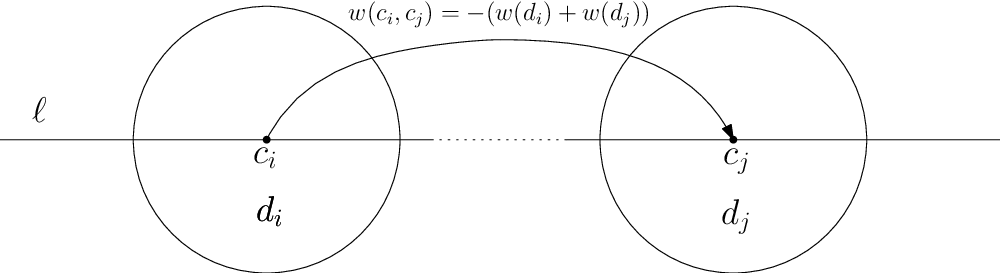}
    \caption{Calculation of the weight of an edge.}
    \label{fig:ewt}
\end{figure}

Let $w(d_i)$ denote the total weight of the points covered by the disk with radius $\lambda$ centered at $c_i\in V$. We calculate $w(d_i)$ for the disks centered at each point $c_i\in V$.

\noindent Now, the  weight of an edge $\overline{ij}\in E$ is calculated as follows:
\begin{itemize}
    \item $w(i,j)= +\infty$ if the $dist(i,j)<2\lambda$.
    \item $w(i,j)=-(w(d_i)+w(d_j))$ if $dist(i,j)\geq 2\lambda$ for all $i,j\in V$, i.e., we add a directed edge between every pair of vertices $i,j\in V$ ($i<j$), if the distance between them is at least $2\lambda$ and we add the corresponding weights (see Figure \ref{fig:ewt}) and negate it. 
    \item Clearly observe that $w(d_s)$ and $w(d_t)$ is zero. It is also zero for the disks centered at first (at most) $k-1$ and last (at most) $k-1$ points (since they are the points on $\ell$ which are separated by a distance of at least $2\lambda$ on the left of $l_1$ and the right of $r_{\textsc{l}}$), respectively for every endpoint of the influence interval.   
\end{itemize}

 Without loss of generality, let $G'(V',E')$ be the graph obtained by the above transformation. There will be $O(nk)$ vertices in $V'$, and a directed edge from $i$ to $j$ for all $i,j\in V'$ such that $i<j$. Then, $G'$ is a complete DAG with $|V'|=O(nk)$ vertices and $|E'|=O(n^2k^2)$ edges, and every edge $(i,j)\in E'$ is assigned a weight as discussed above. Hence, we have the following lemma.
 \begin{lemma} \label{lemma8}
     The graph $G'$ can be constructed in $O(n^2k^2)$ time.
 \end{lemma}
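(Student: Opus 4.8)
The plan is to establish Lemma \ref{lemma8} by directly analyzing the cost of each stage of the transformation described above: building the vertex set $V'$, computing the single-disk weights $w(d_i)$ for every candidate center, and filling in the $|E'|=O(n^2k^2)$ edge weights. I would first argue that $|V'| = O(nk)$: for each point $p_i$ we have the two interval endpoints $l_i, r_i$, and we add $2(k-1)$ auxiliary points around each of these, plus the source $s$ and target $t$; summing over $i \in [n]$ gives $O(nk)$ vertices, and sorting them by $x$-coordinate (to relabel as $v_1,\dots,v_m$ and to break ties so coordinates are distinct) costs $O(nk\log(nk))$, which is dominated by the later bound. The endpoints $l_i, r_i$ themselves are obtained in $O(1)$ each from $\lambda = r_{\textsc{can}}$ and the coordinates of $p_i$ (the influence interval is just the horizontal segment of points on $\ell$ within distance $\lambda$ of $p_i$), and the auxiliary points are simple arithmetic offsets, so the whole vertex construction is $O(nk)$ up to the sort.

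Next I would bound the cost of computing $w(d_i)$ — the total weight of points of ${\cal B}\cup{\cal R}$ covered by the radius-$\lambda$ disk centered at $c_i$ — for all $O(nk)$ centers $c_i \in V'$. The naive approach is, for each center, to test all $n$ points for membership (constant time each given $\lambda$ and the center), yielding $O(n^2 k)$ total, which is within the claimed $O(n^2k^2)$ bound. (If one wants to be slightly more careful one can observe that a point $p_i$ is covered by the disk centered at $c$ exactly when $c$ lies in the influence interval $f_i^\pm$, so this is an interval-stabbing computation that can be done faster by sweeping the endpoints, but for the stated bound the crude count suffices.) Once all $w(d_i)$ are in hand, each edge weight is assigned in $O(1)$ time: we check whether $dist(i,j) < 2\lambda$ (set weight $+\infty$) or $dist(i,j)\ge 2\lambda$ (set weight $-(w(d_i)+w(d_j))$), and there are $|E'| = O((nk)^2) = O(n^2k^2)$ edges since $G'$ is complete on $O(nk)$ vertices. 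Hence filling in all edges is $O(n^2k^2)$, which dominates every other term, and the lemma follows.

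The only real subtlety — and the step I would flag as the main obstacle, though it is a mild one — is making sure the vertex set is genuinely of size $O(nk)$ and not larger: one must confirm that the auxiliary points added on both sides of every interval endpoint, together with $s$ and $t$ placed at distance $2k\lambda$ beyond the extreme endpoints, account for \emph{all} positions where a disk center could sit in an optimal solution. This is exactly the content of the discussion preceding the lemma (at least one disk is centered at an interval endpoint in any optimal packing, and any remaining disks are then at multiples of $2\lambda$ away from such an endpoint), so I would simply invoke that structural observation rather than re-prove it. Everything else is bookkeeping: a sort, an $O(n^2k)$ membership scan, and an $O(n^2k^2)$ edge-weight pass, giving the claimed $O(n^2k^2)$ construction time.
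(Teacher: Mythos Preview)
Your proposal is correct and follows essentially the same approach as the paper: count the $O(nk)$ vertices arising from interval endpoints, their $2(k-1)$ auxiliary offsets, and $s,t$; then observe that the complete DAG on these has $O(n^2k^2)$ edges, each assigned a weight in $O(1)$ time. The paper's own proof is terser and does not separately itemize the sorting step or the $O(n^2k)$ cost of precomputing all $w(d_i)$ values, but your added detail is sound and does not change the argument's structure or the dominating term.
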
 
 \begin{proof}
    We start by considering the way we constructed $G'$. Every demand point $p_i\in {\cal B}\cup {\cal R}$ can contribute at most two endpoints of an interval on $\ell$ at a distance of $2\lambda$ from each other. If we center a disk on that interval, the demand points will either lie on the boundary or the interior of the disk. Next, we add $2(k-1)$ points on both sides of each endpoint on $\ell$ with a separation distance of $2\lambda$ between any two consecutive of them. Thus, we have a total of $O(nk)$ points on $\ell$, which includes $s$ and $t$ and are added to $V'$. Now, from every point in $V'$, we add a weighted directed edge to all the points of $V'$ that lie on the right of that point on $\ell$. This will result in a total of $O(n^2k^2)$ edges, where each edge is assigned a corresponding weight, as discussed earlier. Therefore, the resulting graph $G'(V',E')$ has $O(nk)$ vertices, $O(n^2k^2)$ edges, and can be constructed in $O(n^2k^2)$ time.
 \end{proof}
The edge weights of $G'$ will satisfy the concave Monge property for any four vertices $i, i+1, j, j+1$ of $G'$ such that $i < i+1 < j < j+1$, the weights of the directed edges from $i$ to $j$ and from $i+1$ to $j+1$ are not greater than the weights of the directed edges from $i$ to $j+1$ and from $i+1$ to $j$. 
\begin{observation}
    The edge weights of $G'$ satisfy the concave Monge property.
\end{observation}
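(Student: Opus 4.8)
The plan is to verify the inequality $w(i,j) + w(i+1,j+1) \le w(i,j+1) + w(i+1,j)$ directly from the definition of the edge weights, splitting into cases according to whether the relevant distances are at least $2\lambda$ or strictly less. Recall that $w(i,j) = -(w(d_i) + w(d_j))$ when $dist(i,j) \ge 2\lambda$ and $w(i,j) = +\infty$ otherwise, and that the vertices $v_1, \dots, v_m$ are sorted by increasing $x$-coordinate on $\ell$. The key structural observation I would isolate first is a monotonicity of feasibility: since the points of $V$ are linearly ordered along $\ell$, for a fixed left endpoint the distance $dist(v_i, v_j)$ is non-decreasing in $j$, and for a fixed right endpoint it is non-increasing in $i$. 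Consequently, among the four quantities $dist(i,j)$, $dist(i+1,j+1)$, $dist(i,j+1)$, $dist(i+1,j)$, the ``outer'' pair $dist(i,j+1)$ is the largest and $dist(i+1,j)$ is the smallest (with the other two in between). This means: whenever an ``inner'' edge $(i,j)$ or $(i+1,j+1)$ is feasible (distance $\ge 2\lambda$), the outer edge $(i,j+1)$ is automatically feasible as well.

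With that in hand I would argue by cases on which edges are finite. \textbf{Case 1:} all four edges are finite. Then the inequality reduces to $-(w(d_i)+w(d_j)) - (w(d_{i+1})+w(d_{j+1})) \le -(w(d_i)+w(d_{j+1})) - (w(d_{i+1})+w(d_j))$, and both sides equal $-(w(d_i)+w(d_{i+1})+w(d_j)+w(d_{j+1}))$, so it holds with equality. \textbf{Case 2:} the shortest edge $(i+1,j)$ is infinite but at least one of $(i,j)$, $(i+1,j+1)$ is finite; then by the monotonicity observation $(i,j+1)$ is finite, and the right-hand side is $+\infty$, so the inequality holds trivially. \textbf{Case 3:} both inner edges $(i,j)$ and $(i+1,j+1)$ are infinite; then the left-hand side is $+\infty$, but I must show the right-hand side is also $+\infty$, i.e.\ that $(i+1,j)$ is infinite — this follows because $dist(i+1,j) \le dist(i,j) < 2\lambda$. (The case where exactly one inner edge is infinite is subsumed in Case 2, since then the left side is $+\infty$ and so is the right.) A careful bookkeeping shows these cases are exhaustive.

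The one subtlety — and the main thing to get right rather than a genuine obstacle — is the definition's index restriction ``$1 < i+1 < j < n$'' and the role of the source/target vertices $s, t$ and the padding vertices with zero-weight disks. I would note that since $w(d_s) = w(d_t) = 0$ and the first/last $k-1$ padding vertices also carry zero disk-weight, these vertices behave exactly like ordinary vertices in the algebra of Case 1, so no separate treatment is needed; the Monge inequality over all consecutive quadruples of $V'$ follows from the three cases above. Finally I would remark that the inequality is the \emph{concave} Monge inequality (as opposed to the convex one appearing in Zhang's reduction for the obnoxious-only variant) precisely because the edge weights are \emph{negated} covered-weights, which is the point that makes the subsequent $k$-link-path machinery applicable. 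I do not anticipate any deep difficulty here; the content is entirely in the linear-order monotonicity of distances plus the $+\infty$ convention.
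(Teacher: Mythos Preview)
Your proposal is correct and follows essentially the same route as the paper: both arguments case-split on which of the four pairwise distances fall below $2\lambda$, use the linear ordering of the $v_i$ on $\ell$ to identify $dist(i+1,j)$ as the smallest, verify the finite case reduces to an algebraic identity, and dispose of the remaining cases via the $+\infty$ convention. Your write-up is in fact more explicit than the paper's (which leaves the monotonicity of distances implicit and whose three cases overlap somewhat), but the substance is identical.
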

\begin{proof}
\begin{figure}[!htb]
    \centering
    \includegraphics[width=0.8\textwidth]{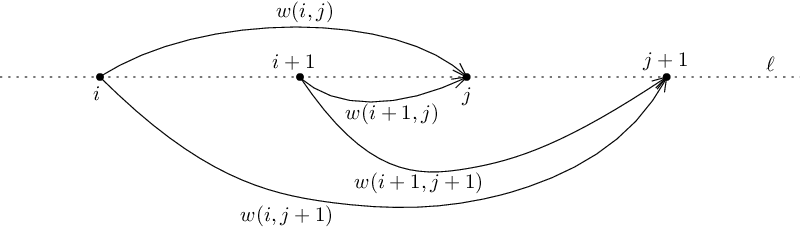}
    \caption{Any four vertices of $G'$ that satisfy the concave Monge property.}
    \label{fig:mng}
\end{figure}

    Recall the definition of the concave Monge property, i.e., $w(i,j)+w(i+1,j+1)\leq w(i,j+1)+w(i+1,j)$ (see Figure \ref{fig:mng}). The weights assigned to the edges of $G'$ are assigned based on the following two rules:
    \begin{enumerate}
     \item $w(i,j)= +\infty$ if the $dist(i,j)<2\lambda$.
     \item $w(i,j)=-(w(d_i)+w(d_j))$ if $dist(i,j)\geq 2\lambda$ for all $i,j\in V$
    \end{enumerate}
    Suppose we select any four vertices with their index labels satisfying $i<i+1<j<j+1$ in $G'$ such that the distance between them is at least $2\lambda$. Then, according to rule 2, the corresponding weights assigned to the edges satisfy $w(i,j)+w(i+1,j+1)= w(i,j+1)+w(i+1,j)$. Now, consider another set of four vertices $i<i+1<j<j+1$ in $G'$ such that the distance between the two closest points among them is less than $2\lambda$, i.e., $dist(i+1,j)<2\lambda$. According to rule 1, the corresponding weight assigned to the edges between these vertices is $+\infty$. Then, we have $w(i,j)+w(i+1,j+1)< w(i,j+1)+w(i+1,j)$, which satisfies the concave Monge property. Finally, suppose all four selected vertices $i<i+1<j<j+1$ in $G'$ have a distance between any two consecutive of them less than $2\lambda$. In this case, according to rule 1, the weights assigned to the corresponding edges satisfy $w(i,j)+w(i+1,j+1)= w(i,j+1)+w(i+1,j)$.

Therefore, we have shown that if we select any four vertices $i<i+1<j<j+1$ in $G'$, the weights of all these edges satisfy the concave Monge property. Thus, the observation follows.
\end{proof}

Since the constructed graph $G'$ is a weighted complete DAG and its edge weights satisfy the concave Monge property, we have the following theorem for finding the minimum weight $(k+1)$-link path between any pair of vertices of $G'$.
  \begin{theorem} \cite{agga1993}\label{thm10}
    The minimum weight $(k+1)$-link path $\Pi_{(k+1)}^*(s,t)$ in $G'$ can be computed in $O(nk\sqrt{k\log{(nk)}})$ time.
\end{theorem}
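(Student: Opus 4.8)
The plan is to check that $G'$ meets, verbatim, the hypotheses of the minimum-weight $L$-link path algorithm of Aggarwal et al.~\cite{agga1993}, and then to apply that algorithm with the right parameters. Recall that their algorithm computes a minimum-weight $L$-link path between two designated vertices of a complete edge-weighted DAG whose weight function is concave Monge, and runs in $O(N\sqrt{L\log N})$ time on an $N$-vertex instance. All three ingredients are already in hand: by the construction of Section~\ref{sec4} the graph $G'$ is complete (a directed edge $\overline{ij}$ for every $i<j$ along $\ell$) and acyclic; by Lemma~\ref{lemma8} it has $|V'|=O(nk)$ vertices (and is built in $O(n^2k^2)$ time); and by the preceding Observation its weight function satisfies $w(i,j)+w(i+1,j+1)\le w(i,j+1)+w(i+1,j)$ for all $i<i+1<j<j+1$, \emph{including} the edges whose weight is $+\infty$.

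The one delicate point is the role of those $+\infty$ weights, which encode the non-overlap constraint $dist(i,j)\ge 2\lambda$ between consecutive disk centres. Two things must be verified. First, that these infinite entries do not violate concavity; this is exactly what the Observation proves, since an entry of $+\infty$ can only sit on the ``outer'' pair $(i,j+1)$ of a quadruple and hence only makes the Monge inequality easier to satisfy. Second, that there is at least one finite-weight $(k+1)$-link path from $s$ to $t$, so that $\Pi_{(k+1)}^*(s,t)$ is well defined and uses no infinite edge; for this it suffices to exhibit one such path, e.g.\ route $s\to c_1\to\cdots\to c_k\to t$ with $c_m=l_i+2(m-1)\lambda$ for a fixed influence-interval endpoint $l_i$, all of which are vertices of $V'$ by construction, are pairwise at distance $\ge 2\lambda$, and lie strictly between $s$ and $t$ because $s$ and $t$ were placed at distance $2k\lambda$ beyond $l_1$ and $r_{\textsc{l}}$.

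With the hypotheses confirmed, the last step is the parameter substitution: here $N=|V'|=O(nk)$ and we want an $L=(k+1)$-link path (a source $s$, a target $t$, and $k$ intermediate disk centres), so the time bound is $O\bigl(nk\sqrt{(k+1)\log(nk)}\bigr)=O\bigl(nk\sqrt{k\log(nk)}\bigr)$, which is the claimed bound. I expect the only real work to be the bookkeeping around the $+\infty$ edges — arguing that the SMAWK / divide-and-conquer search underlying~\cite{agga1993} still runs correctly and within the stated bound in the presence of forbidden (infinite-weight) edges — together with a careful (but routine) check that a finite $(k+1)$-link path always exists; the remainder is a direct appeal to the cited result.
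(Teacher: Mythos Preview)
The paper supplies no proof for this theorem; it is stated with the citation \cite{agga1993} and used as a black box in Theorem~\ref{thm11}. Your proposal --- verify that $G'$ is a complete DAG on $O(nk)$ vertices with concave Monge edge weights (all established earlier in Section~\ref{sec4}), then plug $N=O(nk)$ and $L=k+1$ into the cited result --- is exactly the intended justification, and is in fact more explicit than what the paper records.

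One small slip worth fixing: you say a $+\infty$ entry ``can only sit on the `outer' pair $(i,j+1)$''. Since the vertices lie on a line in index order, the \emph{inner} pair $(i+1,j)$ has the shortest of the four distances, so if any single edge among the quadruple is infinite it is $(i+1,j)$; that edge is on the right-hand side of the Monge inequality, which is what makes the inequality hold. Same conclusion, opposite label. This does not affect correctness, since the paper's preceding Observation already establishes the concave Monge property in full; your finite-path witness and parameter substitution are both sound.
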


\begin{theorem}\label{thm11}
We can solve the \textsc{CSofl} problem in polynomial time.
\end{theorem}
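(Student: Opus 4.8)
\section*{Proof proposal for Theorem \ref{thm11}}

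The plan is to assemble the machinery of Sections \ref{sec3} and \ref{sec4} into a single enumerate-and-optimize procedure. I would iterate over every candidate radius $\lambda\in{\cal L}_{\textsc{can}}$ (including the degenerate radius $0$ of Configuration-0, whose total weight is $0$); by Lemma \ref{lemma7} there are only $O(n^2)$ of them. For a fixed $\lambda$, I would build the DAG $G'=G'(V',E')$ exactly as in Section \ref{sec4} (Lemma \ref{lemma8} gives an $O(n^2k^2)$ construction), precompute $w(d_i)$ for every $i\in V'$, assign the edge weights, invoke Theorem \ref{thm10} to obtain a minimum-weight $(k+1)$-link path $\Pi^*_{(k+1)}(s,t)$ in time $O(nk\sqrt{k\log(nk)})$, read off the $k$ internal vertices of the path as the $k$ disk centers, and record the corresponding total covered weight. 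Finally I would output the placement of maximum recorded total weight, breaking ties in favour of the smallest $\lambda$, which enforces the ``minimum radius'' requirement.

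Correctness splits into two parts. For the radius: the configuration analysis of Section \ref{sec3}, summarized in Observation \ref{ob2}, shows that in some optimal solution the common radius $r_{opt}$ is determined by one blue point or by a blue--red or blue--blue pair lying on a disk boundary (or is $0$), hence $r_{opt}\in{\cal L}_{\textsc{can}}$, so it suffices to solve, for each $\lambda\in{\cal L}_{\textsc{can}}$, the sub-problem of choosing the best $k$ interior-disjoint radius-$\lambda$ disks centered on $\ell$. For the placement: fix $\lambda$. Any $(k+1)$-link $s$--$t$ path of finite weight visits internal vertices that are pairwise at distance $\ge 2\lambda$ (shorter edges carry weight $+\infty$), so these are centers of $k$ interior-disjoint disks; conversely, using $w(d_s)=w(d_t)=0$ and the fact that interior-disjoint disks cover each point at most once, the telescoping identity $w(s,i_1)+\sum_{j=1}^{k-1}w(i_j,i_{j+1})+w(i_k,t)=-2\sum_{j=1}^{k}w(d_{i_j})$ holds, so the path weight is exactly $-2$ times the total weight covered. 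Hence a minimum-weight $(k+1)$-link path corresponds to a maximum-weight choice of $k$ disk centers drawn from $V'$. The one remaining point is that restricting centers to $V'$ loses nothing: since at least one disk of an optimal solution is anchored at an interval endpoint (as noted in Section \ref{sec4}), and since sliding any other disk until a demand point reaches its boundary (placing its center at some $l_i$ or $r_i$) or until it abuts a neighbour (offsetting it by a multiple of $2\lambda$, at most $2(k-1)\lambda$, from an anchored disk) never decreases the covered weight, every disk center can be driven into $V'$, while disks meeting no point slide into the length-$2k\lambda$ buffers flanking $s$ and $t$ and contribute $0$.

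The running time is $O(n^2)$ candidate radii times $\big(O(n^2k^2)+O(nk\sqrt{k\log(nk)})\big)$ per radius, i.e. $O(n^4k^2)$, which is polynomial; this proves the theorem. The step I expect to be the main obstacle is the ``no loss of generality'' claim above: formalizing the sliding/abutting perturbation so that the centers of some optimal solution all land among the $O(nk)$ vertices of $V'$ (in particular, justifying that $2(k-1)$ auxiliary points on each side of each endpoint suffice), and disposing of degenerate configurations where a blue point lies on the common boundary of two touching disks, where the open-versus-closed ($\partial$) convention must be invoked. Everything else is bookkeeping over Lemmas \ref{lemma7} and \ref{lemma8}, the concave Monge observation, and Theorem \ref{thm10}.
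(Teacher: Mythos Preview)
Your proposal is correct and follows essentially the same approach as the paper: enumerate the $O(n^2)$ candidate radii of Lemma \ref{lemma7}, for each one build $G'$ via Lemma \ref{lemma8}, solve the minimum-weight $(k+1)$-link path via Theorem \ref{thm10}, and read off the centers and covered weight via the identity $\sum_{i\in Sol}w(d_i)=-w(\Pi^*_{(k+1)}(s,t))/2$, arriving at the same $O(n^4k^2)$ bound. Your write-up is in fact more explicit on correctness (the telescoping identity and the sliding/abutting argument for why centers may be restricted to $V'$) than the paper's own terse proof, which merely cites Lemmas \ref{lemma7}, \ref{lemma8} and Theorem \ref{thm10} and states the running time and output formula.
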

\begin{proof}
It follows from Lemma \ref{lemma7}, Lemma \ref{lemma8} and Theorem \ref{thm10}. The running time of the algorithm is $n^2\cdot(O(n^2k^2)+O(nk\sqrt{k\log{(nk)}}))=O(n^4k^2)$.
The algorithm will return the minimum $r_{\textsc{can}}=r_{opt}$, corresponding to which the computed $(k+1)$-link path between $s$ and $t$ has the minimum total weight $w(\Pi_{(k+1)}^*(s,t))$. The disks  with radius $r_{opt}$ can be centered at $k$ internal vertices of the $(k+1)$-link path (excluding the terminal vertices $s$ and $t$). The total weight is $\sum_{i\in Sol}w(d_i)= -w(\Pi_{(k+1)}^*(s,t))/2$, where $d_i$ is a disk centered at $i$ having radius $r_{opt}$ and $Sol=V(\Pi_{(k+1)}^*(s,t))\setminus\{s, t\}$ is the set of vertices (the corresponding points on $\ell$) of $(k+1)$-link path except $s$ and $t$. 
\end{proof}
\noindent \textbf{Improvement:} Recall that the points corresponding to the vertices in $V'$ are labeled as $1, 2, \ldots, m$, where $m=O(nk)$. Here, we show that we can improve the runtime of Theorem \ref{thm11} (by almost linear factor) by not explicitly constructing the complete graph $G'$. As we have seen in the proof of Lemma \ref{lemma8}, this construction requires $O(n^2k^2)$ time for each of $O(n^2)$ candidate radius. However, for every candidate radius $r_{\textsc{can}}$, we need to precompute the two arrays $w[ ]$ and $p[ ]$, each of size $O(nk)$. Here, $w[i]$ stores the sum of weights of the demand points covered by the disk of radius $r_{\textsc{can}}$ centered at a point labeled $i\in V'$ on $\ell$, for each $i\in [m]$. The element $p[i]$ stores the index $i'\in V'$  of the rightmost point at a distance of at least $2\lambda$ from $i$ such that $i'<i$. We now give a dynamic program algorithm to compute the maximum weight of a $(k+1)$-link path from point $1$ to point $m$ (here, the points labeled $1$ and $m$ are the vertices $s$ and $t$ respectively, in $G'$). For a pair of points $i, j$ ($ i < j$) on $\ell$; we redefine the weight of the edge $(i,j)$ to be equal to $w(j)$ as we need not negate the sum of weights and construct the whole directed graph .

We define the subproblem $\phi(i,j)$ as the problem of finding the maximum weight $j$-link path in the subgraph $G'_i$ induced by the vertices $1, 2, \dots, i$. That is,  $\phi(i,j)=\max\limits_{i'}\{w(\Pi_j(1,i'))\}$, where $(j+1)\leq i'\leq i$. Then we have the following recurrence:
\begin{equation}\label{dpeq}
    \phi(i,j)=\max\{\phi(i-1,j),\phi(p[i],j-1)+w[i]\}
\end{equation}

 As $i=1,2,\dots, O(nk)$, and $j=1,2,\dots,k+1$, there are $nk\cdot k=O(nk^2)$ entries in the DP table table $\phi(i,j)$, each requiring $O(1)$ time to compute.
 
Hence, for a given $r_{\textsc{can}}$, the bottom-up implementation of the above dynamic programming algorithm takes time $O(nk^2)$ provided that we have the entries $w[i]$ and $p[i]$ precomputed for each $i\in[m]$.
\begin{theorem}
   The above-improved algorithm for the \textsc{CSofl} problem has the time complexity of $O(n^3k \cdot\max{(n, k)})$.
\end{theorem}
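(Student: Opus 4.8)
The plan is to bound the running time by multiplying the number of candidate radii by the work done per radius, and to show that this per-radius work is $O(nk\cdot\max(n,k))$ once two auxiliary arrays have been precomputed --- the speedup over Theorem~\ref{thm11} coming precisely from never materializing the $\Theta(n^2k^2)$ edges of $G'$ (cf. Lemma~\ref{lemma8}). By Lemma~\ref{lemma7} we have $|{\cal L}_{\textsc{can}}|=O(n^2)$, so the outer loop over $r_{\textsc{can}}\in{\cal L}_{\textsc{can}}$ runs $O(n^2)$ times; it therefore suffices to establish that, for a fixed $\lambda=r_{\textsc{can}}$, assembling the vertex set $V'$, precomputing $w[\cdot]$ and $p[\cdot]$, and filling the table $\phi$ via \eqref{dpeq} together cost $O(nk\cdot\max(n,k))$. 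Multiplying the two bounds then yields $O(n^2)\cdot O(nk\cdot\max(n,k))=O(n^3k\cdot\max(n,k))$, which is the claim. Correctness of the returned solution is inherited from Theorem~\ref{thm11} and the reduction preceding it: the recurrence \eqref{dpeq} computes a maximum-weight $(k+1)$-link path from $s$ to $t$, which is (up to the factor $2$ already handled in Theorem~\ref{thm11}) the negation of the minimum-weight $(k+1)$-link path, and, as in the algorithm outline, we retain among all radii attaining the maximum total weight the smallest one.

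First I would account for the construction of $V'$. Each demand point yields at most two endpoints of an influence interval, and around every such endpoint we insert $2(k-1)$ equally spaced points, so $|V'|=m=O(nk)$; sorting these points by $x$-coordinate (and perturbing to make the coordinates distinct) costs $O(nk\log(nk))$, which is dominated by $O(nk\cdot\max(n,k))$ since $\log(nk)\le 2\max(n,k)$. Next, I would precompute $w[\cdot]$: for each of the $m$ candidate centers $c\in V'$, $w[c]$ is the sum of the weights of the demand points whose influence interval of length $2\lambda$ contains $c$, with blue intervals regarded as closed and red intervals as open. Scanning all $n$ demand points for each center gives $w[\cdot]$ in $O(n\cdot nk)=O(n^2k)$ time (alternatively, merging the $2n$ interval endpoints with the $m$ query positions, sorting once, and sweeping while maintaining a running weight sum gives $O((n+m)\log(n+m))=O(nk\log(nk))$ time); either way this is $O(nk\cdot\max(n,k))$. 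Finally, $p[\cdot]$, where $p[i]$ is the largest index $i'<i$ with $dist(i',i)\ge 2\lambda$, is obtained by one left-to-right two-pointer scan over the sorted list $V'$ in $O(m)=O(nk)$ time.

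It then remains to run the dynamic program. The table $\phi(i,j)$ has $i$ ranging over the $m=O(nk)$ vertices and $j$ over $1,\dots,k+1$, for $O(nk^2)$ entries, and by \eqref{dpeq} each entry is the maximum of two earlier entries plus $w[i]$, hence computable in $O(1)$ time; thus the table is filled in $O(nk^2)$ time and the answer for this radius is read from $\phi(m,k+1)$. Summing the contributions, the per-radius cost is $O(n^2k)+O(nk\log(nk))+O(nk)+O(nk^2)=O(nk\cdot\max(n,k))$, and multiplying by the $O(n^2)$ candidate radii gives the stated bound $O(n^3k\cdot\max(n,k))$.

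The step I expect to be the main obstacle is keeping the precomputation of $w[\cdot]$ inside the time budget while correctly respecting the asymmetry between blue coverage (the closed disk) and red coverage (the open interior): in the sweep, an interval's left endpoint, right endpoint, and the query points that are strictly inside versus exactly on the boundary must be ordered so that a blue point lying on a disk boundary is still counted while a red point on the boundary is not. A secondary point to verify is the endpoint of the path --- that taking the optimum at $\phi(m,k+1)$ is legitimate, i.e., that any placement using fewer than $k$ nontrivial disks can be completed to an exactly-$(k+1)$-link path from $s$ to $t$ by appending zero-weight disks centered at the dummy far-right vertices --- so that the value of the DP indeed matches $-w(\Pi_{(k+1)}^*(s,t))/2$ from Theorem~\ref{thm11}. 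Once these are settled, the remaining complexity bookkeeping is routine.
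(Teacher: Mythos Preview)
Your proof is correct and follows the same high-level decomposition as the paper: an outer loop over the $O(n^2)$ candidate radii (Lemma~\ref{lemma7}), and for each radius the assembly of $V'$ with $|V'|=O(nk)$, precomputation of the arrays $w[\cdot]$ and $p[\cdot]$, and the $O(nk^2)$ dynamic program of \eqref{dpeq}. The one genuine difference is in how you cost the array $w[\cdot]$. The paper invokes circular range \emph{reporting} queries (preprocessing $O(n\log n)$, query $O(\log n+\kappa)$) and then has to control the total output size $\sum_i\kappa_i$ via a ply argument, showing that the ply of ${\cal B}\cup{\cal R}$ with respect to the $O(nk)$ query disks is $O(n)$, which yields $O(n^2k)$ per radius. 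You instead take the direct route---either a naive $O(n)$ scan per center for a clean $O(n^2k)$, or a single sweep over the merged list of interval endpoints and query positions for $O(nk\log(nk))$---which hits the same (or better) bound without any auxiliary data structure or ply bookkeeping. Both arguments land at per-radius cost $O(nk\cdot\max(n,k))$ and hence the stated total $O(n^3k\cdot\max(n,k))$; your version is more elementary, while the paper's version has the side benefit of already having the range-search machinery in place for the unweighted special cases treated later. The two caveats you flag (handling the open/closed boundary asymmetry in the sweep, and making sure $\phi(m,k+1)$ really captures the intended $(k+1)$-link path by padding with zero-weight dummy centers) are exactly the right loose ends, and they are no harder here than in the paper's own argument.
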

\begin{proof} The proof is as follows:
    \begin{itemize}
        \item There are $O(n^2)$ candidate radii in ${\cal L}_{\textsc{can}}$.
        \item For each candidate radius $\lambda\in {\cal L}_{\textsc{can}}$, we find $O(nk)$ points on $\ell$ as discussed above.
        \item To compute the weight $w[i]$ of each point $i\in V'$ on $\ell$, we answer a circular range reporting query \cite{chan2016}, which takes $O(\log n+\kappa)$ time for a query circle centered at each of $O(nk)$ points on $\ell$, where $\kappa$ is the number of points reported. It has a preprocessing time of $O(n\log{n})$ and requires $O(n)$ space.
        \item For each $\lambda\in {\cal L}_{\textsc{can}}$, the above dynamic programming algorithm will take $O(nk^2)$ time. However, the bottleneck in computing the optimal value $\phi(m,k+1)$ is in computing the arrays $w[ ]$ for each of $O(n^2)$ candidate radius. The total time for computing this array is $n\log{n}+\sum\limits_{j=1}^{|{\cal L}_{\textsc{can}}|}(\sum\limits_{i=1}^{m}(\log{n}+\kappa_i))$, where $\kappa_i$ is the number of points reported by the query algorithm for a given query disk centered at $i\in[m]$, where $m=O(nk)$. We see that $\sum\limits_{j=1}^{|{\cal L}_{\textsc{can}}|}(\sum\limits_{i=1}^{m}(\log{n}+\kappa_i))=n^3k\log{n}+\sum\limits_{j=1}^{|{\cal L}_{\textsc{can}}|}n\chi$, where $\chi=\max\{\chi_1, \chi_2, \ldots, \chi_{O(n^2)}\}$, and $\chi_j$ is the ply\footnote{The ply of a set $P$ of points with respect to a set $D$ of disks $d_1, d_2, \ldots,$ is the largest number of those disks in $\bigcup_{i=1,2.\ldots} d_i$ whose intersection contains a point $p\in P$.} of the pointset ${\cal B}\bigcup{\cal R}$ with respect to the set of all  disks $i\in [m]$ for a candidate radius $\lambda_j \in {\cal L}_{\textsc{can}}$. Observe that the ply of ${\cal B}\bigcup{\cal R}$ for a given set of $O(nk)$ disks is $O(n)$ only since a point from ${\cal B}\bigcup{\cal R}$ lying in a disk centered at an endpoint $i$ of influence interval can not be contained inside the $2(k-1)$ disks centered at distances $l_i+2\lambda,l_i+4\lambda,\dots,l_i+2(k-1)\lambda, l_i-2\lambda, l_i-4\lambda,\dots,l_i-2(k-1) \lambda$. Therefore, $\sum\limits_{j=1}^{|{\cal L}_{\textsc{can}}|}(\sum\limits_{i=1}^{m}\kappa_i)\leq\sum\limits_{j=1}^{|{\cal L}_{\textsc{can}}|}(nk\chi)=O(n^4k)$ as $\chi=O(n)$.
        \item Hence the total running time is $n^2\cdot(O(nk^2)+O(nk\log{n})+O(n^2k))=O(n^3k\cdot \max{(n,k)})$.
    \end{itemize}
Now, we prove the correctness of the dynamic programming recurrence relation \ref{dpeq} by inducting on the number of disks placed.

\noindent \textbf{Correctness:} 
Fix a radius $\lambda\in {\cal L}_{\textsc{can}}$.
By induction on $i+j$, we can prove the recurrence relation \ref{dpeq} is correct, as follows. For the base case $j=1$, $i\geq 2$, we have $\phi(i,1) = w(\Pi_1(1,i))=\max\limits_{2\leq q\leq i}{(w[q])}$, which is the maximum weight of a 1-link path originating at vertex 1 in the subgraph $G'_i$ induced by the vertices $1,2,\dots,i$. This is the optimal solution for the subproblem $\phi(i,1)$. For the base case $j\geq 2$, $i=2$, we have that $\phi(i,j)=\phi(i,2)$ as the weight contributed by the remaining $j-2$ disks centered on $\ell$ is zero.

Let $p[s]= \max\{q\mid (dist(s,q)\geq 2\lambda, 2 \leq q<s\}$ for $2\leq s \leq i$. Assume that the recurrence relation holds for all subproblems $\phi(i',j')$, where $(i'+j')<(i+j)$. Consider the subproblem $\phi(i,j)$, and for solving this subproblem, we consider two cases for the vertex $i$: either $\Pi_j(1,i)$ uses vertex $i$ or it doesn't.

\noindent Case $1$: $\Pi_j(1,i)$ does not use vertex $i$. In this case, $\Pi_j(1,i)$ is also an optimal $j$-link path in $G'_{i-1}$ by induction hypothesis. Therefore, the optimal solution for the subproblem $\phi(i,j)$ is the same as for the subproblem $\phi(i-1,j)$.

\noindent Case $2$: $\Pi_j(1,i)$ uses vertex $i$. Let $i' = p[i]$, which is the predecessor of $i$ on $\Pi_j(1,i)$. Then, $\Pi_j(1,i)$ can be decomposed into two parts: an optimal $(j-1)$-link path in $G'_{i'}$ (by induction hypothesis), denoted by $\Pi_{j-1}(1,i')$, and the edge $(i',i)$ with weight $w[i]$. Since $\Pi_j(1,i)$ is an optimal path ending at $i$ in the subgraph $G_i'$, the weight of $\Pi_j(1,i)$ is equal to the sum of the weights of $\Pi_{j-1}(1,i')$ and $(i',i)$, i.e., $w(\Pi_j(1,i)) = w(\Pi_{j-1}(1,i')) + w[i]$.

Therefore, the optimal solution for the subproblem $\phi(i,j)$ is the maximum weight of all $j$-link paths in $G'_i$. This is achieved by either taking the optimal solution for the subproblem $\phi(i-1,j)$ or by taking the optimal solution for the subproblem $\phi(i',j-1)$ and adding the weight of edge $(i',i)$, i.e., $\phi(i,j) = \max{(\phi(i-1,j),\phi(p[i],j-1)+w[i])}$.

By using the recurrence relation for $\phi(i-1,j)$ and $\phi(p[i],j-1)$, which can be computed by solving the subproblems $\phi(i-1,j-1)$ and $\phi(p[i],j-1)$. Therefore, we can use the recurrence relation to obtain the optimal solution for $\phi(i,j)$.

By the principle of mathematical induction, the recurrence relation holds for all subproblems $\phi(i,j)$, where $1\leq j \leq k$. Given that we have precomputed all the values in the arrays $w[]$ and $p[]$, it takes constant time to compute the optimal solution to each subproblem by combining optimal solutions to smaller subproblems. Further, we have $O(nk^2)$ distinct subproblems in total for the recurrence. Hence, the overall time complexity of the algorithm is $O(nk^2)$.
\end{proof}

\section{Special cases of \textsc{CSofl}}
In this section, we consider the following two special cases of the \textsc{CSofl} problem with some specific application.
\begin{problem}
\textbf{\textsc{AllBlue-MinRed: }}The problem aims to cover all blue points while covering the minimum number of red points. To solve this problem, we modify the weights of the demand points as follows: for every point $p_i\in {\cal R}$, let $w_i=\delta$ and for every point $p_i\in {\cal B}$, the weight $w_i>-|{\cal R}|\delta$, where $\delta\in \mathbb{R}$ is an arbitrary real value and $\delta<0$. 
\end{problem}

This problem has some specific applications in defense, as will be discussed: assuming a scenario where there are two groups of points along a horizontal line, one represented by blue points (enemy forces) and the other by red points (civilians), the goal is to determine the center locations and blast radius required for a fixed number of explosives to target all enemy forces while isolating the civilians as much as possible. Alternatively, suppose the scenario is such that the red points represent enemy forces, and the objective is to establish wireless communication among our own forces (represented by blue points). In that case, the goal is to place $k$ base stations to cover all blue forces while minimizing the transmissions intercepted by the red forces (enemy forces).
\begin{problem}
\textbf{\textsc{MaxBlue-NoRed: }}In this problem, we need to cover the maximum number of blue points, and at the same time, none of the red points need to be covered. To solve this problem, we modify the weights of the demand points as follows: for every point $p_i\in {\cal B}$, let $w_i=\delta$, and for every point $p_i\in {\cal R}$, the weight $w_i<-|{\cal B}|\delta$, where $\delta\in \mathbb{R}$ is an arbitrary real value and $\delta>0$. 
\end{problem}
This problem has the following specific applications: place a set of $k$ sensors on a horizontal line to cover as many blue points as possible while avoiding red ones. This scenario can arise, for example, in battlefield surveillance, where the red points represent friendly forces, and the blue points represent enemy forces. The goal is to deploy sensors to monitor the enemy forces while avoiding the friendly forces. Similarly, the problem can arise in wildlife conservation, where the blue points represent areas of high animal activity, and the red points represent protected or private residential areas. The goal is to deploy sensors to monitor animal activity while avoiding private or protected areas. 





\begin{claim}\label{claim1}
    The algorithm of Theorem \ref{thm11} will eventually find an optimal solution (i.e., selects at most $k$ facility locations on $\ell$ to cover all the blue points) for the \textsc{AllBlue-MinRed} problem.
\end{claim}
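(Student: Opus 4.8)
The plan is to reduce the claim to two facts: (i) for the modified weights, every feasible placement that covers all of $\cal B$ has strictly larger objective value than every feasible placement that leaves some blue point uncovered; and (ii) a placement covering all of $\cal B$ actually lies in the search space that the algorithm of Theorem~\ref{thm11} explores. Given (i) and (ii), the global optimum returned by the algorithm must cover all blue points; and since $\delta<0$, the weight of any all-blue-covering placement equals $\sum_{p_i\in{\cal B}}w_i+\delta\cdot(\#\text{red points covered})$, so maximizing weight over such placements is the same as minimizing the number of covered red points, after which the tie-break of Theorem~\ref{thm11} selects the smallest such radius. Hence the output is an optimal \textsc{AllBlue-MinRed} solution.

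For (i), let $S$ be any feasible placement of at most $k$ congruent disks covering all of $\cal B$, and let $S'$ be a feasible placement missing at least one blue point $p_{i_0}$. The value $W(S)$ is $\sum_{p_i\in{\cal B}}w_i$ plus the sum of the (negative) weights of the red points covered by $S$, so $W(S)\ge \sum_{p_i\in{\cal B}}w_i+|{\cal R}|\delta$. On the other hand, since every blue weight is positive and the red contribution is non-positive, $W(S')\le \sum_{p_i\in{\cal B}}w_i-w_{i_0}$. By the definition of the modified weights, $w_{i_0}>-|{\cal R}|\delta$, hence $W(S')<\sum_{p_i\in{\cal B}}w_i+|{\cal R}|\delta\le W(S)$. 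This is the one genuinely quantitative step, and it is short; the choice $w_i>-|{\cal R}|\delta$ in Problem~1 is exactly what makes it go through.

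For (ii), I would first establish feasibility: let $r^{\circ}$ be the radius of the smallest disk centered on $\ell$ that contains all of $\cal B$. By a line-constrained smallest-enclosing-disk argument (the same reasoning used for Observation~\ref{ob2} and Lemma~\ref{lemma7}), this disk is determined by one or two points of $\cal B$ on its boundary, so $r^{\circ}$ is one of the radii inserted into ${\cal L}_{\textsc{can}}$ by \textit{Configuration}-1 or \textit{Configuration}-2. Placing one disk of radius $r^{\circ}$ there and the remaining $k-1$ disks pairwise-disjointly far along $\ell$ where no demand point lies yields a feasible placement covering all of $\cal B$; in the DAG construction of Section~\ref{sec4} with $\lambda=r^{\circ}$ this corresponds to a genuine $(k+1)$-link path from $s$ to $t$, with the ``useless'' disks sitting at zero-weight extra vertices. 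Thus when the algorithm processes $\lambda=r^{\circ}$ (or the smaller candidate radius actually realizing the optimum) it evaluates a placement covering all of $\cal B$, and by (i) no higher-weight placement over any $r_{\textsc{can}}\in{\cal L}_{\textsc{can}}$ can miss a blue point; since the algorithm maximizes over all candidate radii and breaks ties toward the smallest radius, its output is an optimal \textsc{AllBlue-MinRed} solution.

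The main obstacle I anticipate is bookkeeping rather than mathematics: reconciling ``at most $k$ facilities'' with the algorithm of Theorem~\ref{thm11}, which always outputs exactly $k$ disk centers (the internal vertices of a $(k+1)$-link path). The resolution is that some of those centers may sit at extra vertices whose disks cover nothing (weight $0$), which is precisely what ``at most $k$ useful facilities'' means; one must verify that the construction of $V$ and the edge weights in Section~\ref{sec4} always admit such zero-weight padding, which they do via the $2(k-1)$ vertices added around each interval endpoint and near $s$ and $t$. Everything else is the short inequality in (i) together with the line-constrained smallest-enclosing-disk observation in (ii).
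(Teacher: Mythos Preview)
Your proposal is correct and follows essentially the same approach as the paper: establish feasibility of an all-blue-covering placement and show via the weight choice $w_i>-|\mathcal{R}|\delta$ that no placement missing a blue point can match it. Your direct comparison in (i) between an arbitrary all-blue solution $S$ and an arbitrary blue-missing solution $S'$ is in fact tighter than the paper's local-perturbation argument, and your explicit verification in (ii) that the relevant radius lies in $\mathcal{L}_{\textsc{can}}$ (plus the ``at most $k$'' bookkeeping) fills in details the paper leaves implicit.
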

\begin{proof}
Consider an instance of \textsc{CSofl} with $w_i=\delta$ for every $p_i\in {\cal R}$ and the weight $w_i>-|{\cal R}|\delta$ for every $p_i\in {\cal B}$, where $\delta\in \mathbb{R}$ is an arbitrary negative real value. 

\textbf{Feasibility:} The weight assignment of $w_i$ ($ > -|{\cal R}|\delta$) guarantees that all blue points will be covered. In the worst-case scenario, a single disk can cover all points, both blue and red points, whose total weight is positive due to the weight assignment. The remaining $k-1$ disks can be centered on $\ell$ to cover none of the points. This ensures that a feasible solution exists for the \textsc{AllBlue-MinRed} problem.

\textbf{Optimality:} Suppose there is a feasible solution for the \textsc{CSofl} problem that places at most $k$ facility centers on $\ell$. Let these facilities cover all points in ${\cal B}$ and some points in ${\cal R}$, with total weight equal to $\rho$. Now observe that it is impossible to improve the weight $\rho$ to $\rho'$ ($> \rho$) by relocating one of the center locations, which then uncovers $m'$ red points and one blue point (whose weight is, say, $-|{\cal R}|\delta+\epsilon$ for some $\epsilon>0$). If we do so, then the updated weight would be $\rho'=\rho-m'\delta+|{\cal R}|\delta-\epsilon$. But, $\rho'$ is no better than the earlier weight $\rho$ since $m'\leq |{\cal R}|$, $\delta<0$ and $(-m'\delta+|{\cal R}|\delta-\epsilon)<0$. Further, the optimal solution with total weight $\rho$ for the \textsc{CSofl} (computed by using the algorithm of Theorem \ref{thm11}) is also optimal for this particular variant since $\rho$ can not be improved by uncovering only red points. Hence, the above proposed algorithm for the \textsc{CSofl} problem will also correctly solve the \textsc{AllBlue-MinRed} problem.
\end{proof}


\begin{claim}\label{claim2}
The algorithm of Theorem \ref{thm11} solves the \textsc{MaxBlue-NoRed} problem optimally.
\end{claim}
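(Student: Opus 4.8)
The plan is to mirror the argument used for Claim~\ref{claim1}, exploiting the fact that with the stated weight modification the (negative) contribution of even a single covered red point dominates the combined (positive) contribution of all blue points. So I would instantiate \textsc{CSofl} with $w_i=\delta>0$ for every $p_i\in{\cal B}$ and $w_i<-|{\cal B}|\delta$ for every $p_i\in{\cal R}$, run the algorithm of Theorem~\ref{thm11} (or its improved version), and show that the placement it returns is precisely an optimal solution of \textsc{MaxBlue-NoRed}.

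\emph{Feasibility and non-negativity of the optimum.} By Configuration-0 (Observation~1) the algorithm always outputs a feasible placement, and placing the $k$ disks so as to cover no demand point gives total weight $0$; hence the optimal \textsc{CSofl} value $\rho$ it computes satisfies $\rho\ge 0$.

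\emph{No red point is covered.} Suppose, for contradiction, that some red point $p_i$ lies in the interior of a disk of the returned solution. Then the total weight is at most $b\delta+w_i$, where $b\le|{\cal B}|$ is the number of blue points covered; since $w_i<-|{\cal B}|\delta\le-b\delta$, this forces $\rho<0$, contradicting $\rho\ge 0$ (covering several red points only makes the bound smaller). Hence the returned solution covers no red point, and its weight is exactly $\rho=\delta\,b^{*}$, where $b^{*}$ is the number of blue points it covers.

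\emph{Optimality.} Since $\delta>0$ and the algorithm maximizes total weight, it maximizes $b^{*}$ over all placements covering no red point; conversely any placement covering $b$ blue points and no red point has \textsc{CSofl}-weight $\delta b\le\rho=\delta b^{*}$, so $b\le b^{*}$. Thus the output covers the maximum possible number of blue points with no red point covered, i.e.\ it is optimal for \textsc{MaxBlue-NoRed}; moreover, among all such optimal placements the algorithm returns one of smallest radius, and the running time is inherited unchanged from Theorem~\ref{thm11} because only the weights were altered. The only point I expect to require care is the boundary convention: in \textsc{CSofl} a red point on $\partial d_j$ is \emph{not} counted (only interior red points are), which is exactly what ``none of the red points need to be covered'' means, while blue points on the boundary \emph{do} count; noting this makes the reduction faithful, and the rest is the same bookkeeping as in Claim~\ref{claim1}.
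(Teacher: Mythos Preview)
Your proof is correct and follows essentially the same approach as the paper: both exploit that a single covered red point outweighs all blue points, so an optimal \textsc{CSofl} solution with these weights must avoid every red point and hence maximizes the blue count. Your argument is in fact slightly cleaner than the paper's, since you derive the ``no red covered'' conclusion directly from $\rho\ge 0$ and a global weight bound, whereas the paper reaches it via a local perturbation (uncovering a red point while possibly losing some blue ones); but the underlying idea and structure are the same.
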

\begin{proof}
Consider an instance of the \textsc{CSofl} problem, in which every $p_i\in {\cal B}$ is associated with the weight $w_i=\delta$, and every $p_i\in {\cal R}$ is associated with the weight $w_i<-|{\cal B}|\delta$, where $\delta\in \mathbb{R}$ and $\delta>0$.

\textbf{Feasibility:} Consider the following trivial feasible solution for the \textsc{CSofl} problem. Let us place $k$ facility center locations on $\ell$ so that they don't cover any blue points. Further, we reduce their radius so that no red points lie in the interior. Note that the total weight of the demand points covered by these facilities is zero. Hence, these $k$ center locations form a feasible solution for the \textsc{MaxBlue-NoRed} problem since none of the red points are covered, and the total weight is zero.

\textbf{Optimality:} Suppose we have a feasible solution with a total weight $\rho$ for the \textsc{CSofl} problem. We will try to increase this weight by relocating one of the centers covering additional $n'$ blue points and (at least) one red point. The update weight would be $\rho'=\rho+n'\delta-|{\cal B}|\delta$ which is smaller than $\rho$ since $n'\leq |{\cal B}|$ and $\delta>0$. Hence, we cannot improve the total weight by perturbing some centers to cover one more red point with the hope that it may allow us to cover some more (or even all) blue points. When we have an optimal solution with the total weight $\rho$ for an instance of the \textsc{CSofl} problem, the weight the covered blue points contribute can not be increased due to its optimality. On the other hand, this solution also can not cover any red point
because we can get a better weight $\rho'=\rho-n'\delta+|{\cal B}|\delta$ (by reducing the radius to uncover these red points. While doing so, we possibly uncover some blue points, say $n'$.) This would contradict that $\rho$ is the optimum. Hence, the above proposed algorithm (of Theorem \ref{thm11}) for the \textsc{CSofl} problem will also correctly solve the \textsc{MaxBlue-NoRed} problem.
\end{proof}

    

\begin{corollary}
The \textsc{AllBlue-MinRed} and \textsc{MaxBlue-NoRed} problems  can be solved in $O(n^3k\cdot\max{(\log{n},k)})$ time.
\end{corollary}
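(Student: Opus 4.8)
The plan is to observe that the two special cases need no new algorithm: by Claim~\ref{claim1} and Claim~\ref{claim2}, running the dynamic program of Section~\ref{sec4} (the improved algorithm behind recurrence~\eqref{dpeq}) on the \textsc{CSofl} instance obtained from the stated reweighting already produces an optimal solution for \textsc{AllBlue-MinRed} and \textsc{MaxBlue-NoRed} respectively. So correctness is inherited wholesale, and the only thing to do is to tighten the running time from $O(n^3k\cdot\max(n,k))$ to $O(n^3k\cdot\max(\log n,k))$.

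Recall that in the improved algorithm the unique cost exceeding $O(nk^2)$ per candidate radius was filling the array $w[\cdot]$: for each of the $O(nk)$ points $i\in V'$ we \emph{reported}, via a circular range query, the demand points covered by the radius-$\lambda$ disk centred at $i$, and the sizes of those reported sets summed (through the ply bound) to $\Theta(n^2)$ per radius in the worst case, i.e. $O(n^4k)$ overall. In both special cases, however, every blue point carries the same weight and every red point carries the same (possibly different) weight, so $w[i]$ is determined solely by the \emph{counts} $b_i$ and $r_i$ of blue and red points covered; a counting query thus suffices in place of a reporting query.

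The key structural fact — already implicit in Section~\ref{sec4} — makes the counts cheap: the radius-$\lambda$ disk centred at a point $c\in\ell$ covers $p_i$ exactly when $c$ lies in the influence interval of $p_i$ (the closed interval $f_i^+=[l_i,r_i]$ if $p_i\in{\cal B}$, and the open interval $(l_i,r_i)$ if $p_i\in{\cal R}$, since red points must lie in the interior). Hence computing $(b_i,r_i)$ for all $O(nk)$ vertices of $V'$ is a one–dimensional interval–stabbing count: sort together the $2n$ interval endpoints and the $O(nk)$ query points, sweep left to right while maintaining the running numbers of currently open blue and red intervals, and read off $(b_i,r_i)$, hence $w[i]$, as the sweep passes $i$; a second linear pass over the same sorted order yields $p[\cdot]$. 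This costs $O(nk\log(nk))$ per candidate radius, after which the DP of recurrence~\eqref{dpeq} costs $O(nk^2)$. Over the $O(n^2)$ candidate radii of Lemma~\ref{lemma7} the total is $O\big(n^2(nk\log(nk)+nk^2)\big)=O(n^3k\cdot\max(\log(nk),k))=O(n^3k\cdot\max(\log n,k))$, using $\log(nk)\le 2\log n$ when $k\le n$ and $\log(nk)<2k$ otherwise.

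The only delicate points are bookkeeping: breaking ties consistently when a query point coincides with an interval endpoint, so that the closed/open membership distinction (a blue point on a disk boundary versus a red point on a disk boundary) is resolved correctly, which is compatible with the earlier convention that the points of $V'$ have distinct $x$-coordinates; and noting that the reweighting leaves the candidate-radius analysis of Section~\ref{sec3} untouched, so Lemma~\ref{lemma7} still applies verbatim. Neither needs a new idea, so I expect the main effort to be merely presenting the sweep cleanly rather than overcoming a genuine obstacle.
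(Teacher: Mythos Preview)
Your proposal is correct and reaches the stated bound, but by a somewhat different route than the paper. The paper's proof is a one-line substitution: since in both special cases there are only two distinct weight values, the circular range \emph{reporting} queries of \cite{chan2016} used to fill $w[\cdot]$ can be replaced by circular range \emph{counting} queries (also from \cite{chan2016}), each costing $O(\log n)$ with no output-sensitive term; this gives $O(nk\log n)$ per candidate radius for the array $w[\cdot]$, and the rest of the analysis is unchanged. You instead exploit the one-dimensional structure directly: because all disks are centred on $\ell$, coverage is equivalent to stabbing an influence interval, so blue and red counts for all $O(nk)$ centres can be obtained by a single sorted sweep over endpoints and query points, avoiding the 2D data structure entirely. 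Your argument is more elementary and self-contained; the paper's is a more black-box drop-in replacement. Both rely identically on Claims~\ref{claim1} and~\ref{claim2} for correctness and on Lemma~\ref{lemma7} for the $O(n^2)$ candidate radii, and both arrive at $n^2\cdot(O(nk\log n)+O(nk^2))=O(n^3k\cdot\max(\log n,k))$.
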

\begin{proof}
    Since there are only two types of weights (namely, $\delta$ and $|{\cal B}|\delta$ or $-|{\cal R}|\delta$), instead of answering circular range reporting queries, we answer range counting queries \cite{chan2016}, viz. blue count and red count for each of $O(nk)$ query circles of every candidate radius. Hence the total running time is $n^2\cdot(O(nk^2)+O(nk\log{n}))=O(n^3k\cdot \max{(\log{n},k)})$. Hence, the theorem follows from Theorem \ref{thm11} and Claims \ref{claim1} and \ref{claim2}.
\end{proof}

\subsection{The \textsc{MaxBlue-NoRed} problem for $k=1$}
In this section, we address the problem of determining the minimum enclosing disk with center on $\ell$, which encloses the maximum number of blue points without enclosing any red points. 
Recall that in the \textsc{MaxBlue-NoRed} problem, we are given two sets of points, blue points ${\cal B}$ and red points ${\cal R}$, lying above a horizontal line $\ell$, where $|{\cal B}|+|{\cal R}|=n$, the goal is to compute a minimum enclosing disk that maximizes the count of blue points (being enclosed in that disk) while ensuring that no red point is enclosed.
\begin{observation}
   If the perpendicular bisector of any two points $p_i$ and $p_j$ intersects $\ell$ at $c_i$, then there exists a disk centered at $c_i$ which has $p_i$ and $p_j$ on its boundary.
\end{observation}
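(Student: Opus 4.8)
The plan is to invoke the standard characterization of the perpendicular bisector as the locus of points equidistant from the two given points, and then simply read off the radius. Let $b_{ij}$ denote the perpendicular bisector of $p_i$ and $p_j$; by definition, a point $q$ in the plane lies on $b_{ij}$ if and only if $dist(q,p_i)=dist(q,p_j)$. The hypothesis of the observation is precisely that $b_{ij}$ meets $\ell$, and that $c_i$ is the point of intersection, so in particular $c_i\in b_{ij}$.

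First I would use membership $c_i\in b_{ij}$ to conclude $dist(c_i,p_i)=dist(c_i,p_j)$, and name this common value $r$. Then I would take the disk $d_i$ of radius $r$ centered at $c_i$: since both $p_i$ and $p_j$ lie at distance exactly $r$ from $c_i$, they lie on the bounding circle $\partial d_i$, which is exactly the claimed conclusion. The argument is essentially one line once the bisector property is stated.

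The only point that needs a remark is nondegeneracy, i.e. that $r>0$ so that $d_i$ is an honest disk. This holds because $p_i$ (and $p_j$) lie strictly above $\ell$ while $c_i\in\ell$, so $c_i\neq p_i$ and hence $r=dist(c_i,p_i)>0$. I would also note in passing that the hypothesis ``$b_{ij}$ intersects $\ell$'' already excludes the only degenerate configuration, namely $p_i$ and $p_j$ having equal $y$-coordinates, in which case $b_{ij}$ is a vertical line parallel to $\ell$ and no intersection point $c_i$ exists. There is no genuine obstacle in this proof; the statement is an immediate consequence of the defining property of the perpendicular bisector, and its role is to justify, in the subsequent $k=1$ analysis, that candidate disk centers on $\ell$ can be enumerated by intersecting bisectors of point pairs with $\ell$.
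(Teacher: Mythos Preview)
Your main argument is correct and is exactly the elementary reasoning one expects here: $c_i$ lies on the perpendicular bisector, hence is equidistant from $p_i$ and $p_j$, and the disk of that common radius does the job. The paper in fact states this observation without proof, so there is nothing further to compare against.

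One small slip in your closing remark: the degenerate configuration in which $b_{ij}$ fails to meet $\ell$ is when $p_i$ and $p_j$ have equal \emph{$x$}-coordinates, so that $b_{ij}$ is \emph{horizontal} and hence parallel to $\ell$. (You wrote ``equal $y$-coordinates'' and ``vertical line parallel to $\ell$,'' which is internally inconsistent since $\ell$ is horizontal.) This does not affect the proof itself, only the parenthetical comment.
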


\noindent The method for solving the problem is as follows:
\begin{itemize}
\item For each pair of points in the set ${\cal B}\cup {\cal R}$, compute the perpendicular bisector of the line segment connecting them. Store the intersection points of these perpendicular bisectors with the line $\ell$ in a set $I$. Also, add to the set $I$ all intersection points of $\ell$ with a vertical line through each blue point since only one blue point may also lie on the boundary of a disk in the optimal solution. 
\item For each $p_i\in I$, construct a disk centered at $p_i$ that passes through the pair of points for which the perpendicular bisector was computed in the previous step.
\item For each disk centered at $p_i\in I$, determine whether it contains any point from the set ${\cal R}$. If so, remove $p_i$ from the set $I$. Otherwise, compute the number of blue points contained in the disk.
\item If $|I|=0$, then there exists no feasible solution. Otherwise, among the disks centered at points in $I$, select the one that contains the maximum number of blue points.
\end{itemize}
\begin{figure}[!htb]
    \centering
    \includegraphics[width=0.75\textwidth]{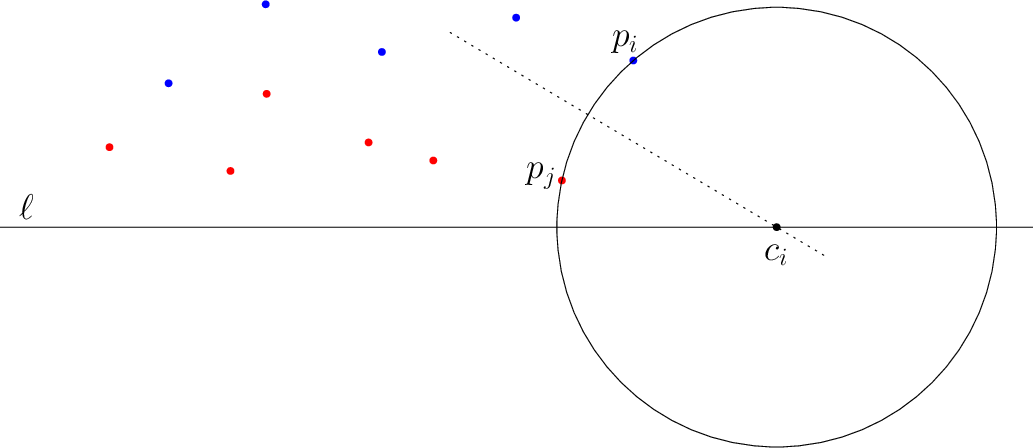}
    \caption{The optimal solution with only one blue point for $k=1$.}
    \label{fig:k1}
\end{figure}
\begin{theorem}\label{mb}
    The \textsc{MaxBlue-NoRed} problem for $k=1$ can be solved in $O(n^3)$ time.
\end{theorem}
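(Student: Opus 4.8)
The plan is to show that the four-step procedure above always inspects an optimal disk and then to bound its cost. The structural fact driving correctness is a \emph{canonical form} for optimal disks: I will prove that if some disk centred on $\ell$ encloses at least one blue point and contains no red point in its interior, then among the disks enclosing the maximum number of blue points (ties broken by smallest radius) there is one, $D^{*}$, pinned down by at most two input points — either $D^{*}$ has radius $y_{p}$ and is centred at the foot of the perpendicular from a single blue point $p$ onto $\ell$, or $\partial D^{*}$ passes through two points of ${\cal B}\cup{\cal R}$. Since the set $I$ consists exactly of the intersections with $\ell$ of the perpendicular bisector of every pair of points of ${\cal B}\cup{\cal R}$ together with the feet of the perpendiculars from all blue points, every canonical disk — and in particular $D^{*}$ — is among the disks constructed in the second step, so the optimum is inspected.

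To establish the canonical form, I would fix the blue set $S\subseteq{\cal B}$ enclosed by some optimal disk and parametrise the disks centred at $(c,0)\in\ell$ of radius $r$ that enclose all of $S$. Writing $\rho_{p}(c)=\sqrt{(c-x_{p})^{2}+y_{p}^{2}}$ for the distance from $(c,0)$ to $p$, these are the pairs $(c,r)$ with $r\ge R(c):=\max_{p\in S}\rho_{p}(c)$, and $R$ is strictly convex in $c$, being a maximum of Euclidean distance functions. The requirement that no red point lie in the disk's interior is $r\le\sigma_{q}(c):=\sqrt{(c-x_{q})^{2}+y_{q}^{2}}$ for every $q\in{\cal R}$. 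So minimising $r$ amounts to minimising the convex function $R$ over the nonempty feasible set $\{c:\ R(c)\le\min_{q\in{\cal R}}\sigma_{q}(c)\}$. Let $c_{0}$ be the unique unconstrained minimiser of $R$. If $c_{0}$ is feasible, then at $c_{0}$ either a single $\rho_{p}$ attains the maximum, so $R=\rho_{p}$ near $c_{0}$ and smoothness forces $c_{0}=x_{p}$ (giving the disk of radius $y_{p}$ centred below $p$), or two of the $\rho_{p}$ are simultaneously maximal, so $c_{0}$ lies on the perpendicular bisector of the corresponding blue points and both lie on $\partial D^{*}$. If $c_{0}$ is infeasible, then inside each connected component of the feasible set — an interval not containing $c_{0}$ — the convex $R$ is monotone, so it is minimised at an endpoint $c_{*}$ at which the red constraint is tight, $R(c_{*})=\sigma_{q}(c_{*})$ for some $q\in{\cal R}$; there the radius simultaneously equals the distance to the farthest enclosed point of $S$ and the distance to $q$, so $\partial D^{*}$ passes through one blue and one red point. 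In every case $D^{*}$ has the claimed form.

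For the running time, the first step builds $O(n^{2})$ perpendicular bisectors and $O(n)$ verticals, so $|I|=O(n^{2})$ and there are $O(n^{2})$ candidate disks. For each candidate disk I would make a single pass over the $n$ points to test whether any red point lies in its open interior (discarding the disk if so) and to count the enclosed blue points; this costs $O(n)$ per disk. Returning the feasible candidate with the largest blue count, and among those the one of smallest radius, yields the optimum by the canonical-form argument, and correctly reports infeasibility exactly when no candidate encloses a blue point without enclosing a red one. The total time is $O(n^{2})\cdot O(n)=O(n^{3})$.

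The hard part will be the canonical-form lemma, and in particular its last case: I must verify that whenever the constrained minimum-radius disk covering $S$ differs from the plain minimum enclosing disk of $S$ (centred on $\ell$), the obstruction is a red point lying \emph{on} the boundary rather than strictly inside, so that the disk is still determined by two input points and is therefore one of the $O(n^{2})$ enumerated candidates. Everything after that is a routine enumerate-and-count argument.
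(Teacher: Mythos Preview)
Your proposal is correct and follows essentially the same approach as the paper: enumerate the $O(n^{2})$ candidate disks determined by at most two input points (together with the $O(n)$ single-blue-point disks), then scan each in $O(n)$ time to reject those containing a red point and to count enclosed blue points, for $O(n^{3})$ total. The paper's own proof of this theorem records only the running-time calculation and leaves correctness to the informal configuration discussion of Section~\ref{sec3}; your convex-analysis derivation of the canonical form---minimising the strictly convex $R(c)=\max_{p\in S}\rho_{p}(c)$ over the closed feasible set cut out by the red constraints, and separating the unconstrained-minimum case from the case where a red constraint is tight at the optimum---is a more rigorous justification of the same structural fact, but not a different route.
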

\begin{proof}
In order to compute $I$, it requires $O(n^2)$ time. If all points in the set ${\cal B}\cup {\cal R}$ have distinct $x$-coordinates, then $|I|=O(n^2)$. Next, for each point $p_i\in I$, the time required to check the interiority of points is $O(n)$. Therefore, the total time complexity of the algorithm is $O(n^2)+O(n^3)=O(n^3)$.
\end{proof}

\noindent \textbf{Improved algorithm:} Here, we improve the running time of the algorithm of Theorem \ref{mb} by almost a linear factor. Let us recall the notations, for a point $p$, we denote its coordinates by $(x_p,y_p)$, and for a pair of points $p,q$, we let $C_{p,q}$ denote the circle whose center lies on the line $y=0$ and whose boundary passes through $p,q$, and we let $(x_{p,q},0)$ denote the center of $C_{p,q}$. Let $C_p$ be a circle with its boundary passing through a point $p$, and its center lying on $\ell$ such that its radius equals $y_p$ and its center at the coordinates $(x_p,0)$.

\begin{claim}\label{clm:pqr}
Given three points $p,q,r$, the point $r$ lies on or inside $C_{p,q}$ if and only if one of the following is true:

(i) $x_p<x_r$ and $x_{p,q} \geq x_{p,r}$;

(OR)

(ii) $x_p>x_r$ and $x_{p,q} \leq x_{p,r}$.
\end{claim}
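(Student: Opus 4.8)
The plan is to set up coordinates so that the line $\ell$ is $y=0$, with all of $p,q,r$ lying strictly above it. Since $C_{p,q}$ is the unique circle centered on $\ell$ through $p$ and $q$, its center $(x_{p,q},0)$ satisfies $(x_p-x_{p,q})^2+y_p^2=(x_q-x_{p,q})^2+y_q^2$, and its squared radius is $R^2=(x_p-x_{p,q})^2+y_p^2$. The point $r$ lies on or inside $C_{p,q}$ iff $(x_r-x_{p,q})^2+y_r^2\le (x_p-x_{p,q})^2+y_p^2$. The whole claim is therefore an elementary manipulation of this one inequality, and the work is in rewriting it in terms of the comparison between $x_{p,q}$ and $x_{p,r}$ (the center of $C_{p,r}$).

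First I would record the key identity: for any two points $a,b$ above $\ell$, the center abscissa is $x_{a,b}=\frac{(x_b^2+y_b^2)-(x_a^2+y_a^2)}{2(x_b-x_a)}$ (the formula already appearing in Configuration-2, with $x_{a,b}=\frac{(y_b-y_a)(y_b+y_a)}{2(x_b-x_a)}+\frac{x_b+x_a}{2}$). Then I would expand the membership inequality $(x_r-x_{p,q})^2+y_r^2\le (x_p-x_{p,q})^2+y_p^2$; the $x_{p,q}^2$ terms cancel, leaving a \emph{linear} inequality in $x_{p,q}$, namely $2x_{p,q}(x_r-x_p)\le (x_r^2+y_r^2)-(x_p^2+y_p^2)$. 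Dividing by $2(x_r-x_p)$ and using the identity above, the right-hand side becomes exactly $x_{p,r}$, so the inequality reads $x_{p,q}(x_r-x_p)\le x_{p,r}(x_r-x_p)$. Now the sign of $x_r-x_p$ controls the direction: if $x_p<x_r$ this gives $x_{p,q}\le x_{p,r}$, and if $x_p>x_r$ it flips to $x_{p,q}\ge x_{p,r}$.

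At this point the statement as I derived it has the two inequalities in the opposite direction to what the claim asserts, so the main subtlety — and the step I would be most careful about — is the orientation bookkeeping: I must double-check whether the claim's convention for $C_{p,q}$ (which of $p,q$ is "first", and on which side the interior lies) matches the sign I get, and in particular confirm the boundary case $x_p=x_r$ is handled (then $r$ lies on $C_p$, i.e. on or inside $C_{p,q}$ exactly when $y_r\le$ the radius, which is a degenerate sub-case to state separately or fold in by a limiting argument). Modulo getting that orientation exactly as stated, the proof is complete: the "on or inside" condition is equivalent to a single linear inequality in $x_{p,q}$, which rearranges to the comparison of $x_{p,q}$ with $x_{p,r}$, with the direction determined by whether $x_p$ is less than or greater than $x_r$ — precisely cases (i) and (ii).
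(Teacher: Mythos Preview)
The paper states this claim without proof, so there is no authors' argument to compare against; your approach is the natural one and is essentially complete. The one concrete issue is a sign slip in the algebra, which is the source of the ``opposite direction'' discrepancy you noticed --- it is not an orientation convention but a dropped minus sign.

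Starting from $(x_r-x_{p,q})^2+y_r^2\le (x_p-x_{p,q})^2+y_p^2$ and cancelling $x_{p,q}^2$, one gets
\[
-2x_r x_{p,q}+x_r^2+y_r^2 \le -2x_p x_{p,q}+x_p^2+y_p^2,
\]
hence $2x_{p,q}(x_p-x_r)\le (x_p^2+y_p^2)-(x_r^2+y_r^2)$, or equivalently
\[
2x_{p,q}(x_r-x_p)\;\ge\;(x_r^2+y_r^2)-(x_p^2+y_p^2),
\]
with $\ge$, not the $\le$ you wrote. Using the identity $(x_r^2+y_r^2)-(x_p^2+y_p^2)=2x_{p,r}(x_r-x_p)$ (your formula for the center abscissa), this becomes $(x_{p,q}-x_{p,r})(x_r-x_p)\ge 0$. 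Now the case split gives: if $x_p<x_r$ then $x_{p,q}\ge x_{p,r}$ (case~(i)); if $x_p>x_r$ then $x_{p,q}\le x_{p,r}$ (case~(ii)) --- exactly as the claim asserts. So no orientation bookkeeping is needed; fix the sign and the proof is done. Your remark on the degenerate case $x_p=x_r$ (where $x_{p,r}$ is undefined) is a fair caveat, though the algorithm in the paper tacitly assumes points are in general position.
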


\begin{proposition}
There is an algorithm that accepts two sets ${\cal B},{\cal R}$ of $n$ ($|{\cal B}|+|{\cal R}|$) points on the plane and finds for every pair $p,q$ of points in ${\cal B}$, the number of points of ${\cal R}$ that lie on or inside the circle $C_{p,q}$.
Further, this algorithm runs in time $O(n^2\log n)$.
\end{proposition}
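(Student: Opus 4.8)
The plan is to exploit Claim~\ref{clm:pqr} to turn, for each fixed $p\in{\cal B}$, the two-dimensional question ``which red points lie on or inside $C_{p,q}$?'' into a one-dimensional rank query that depends on $q$ only through the single number $x_{p,q}$. Throughout I assume, as elsewhere in the paper, that the $n$ input points have pairwise distinct $x$-coordinates (a standard perturbation achieves this); then $x_{p,z}$ is well defined for every $p\neq z$ and $x_{p,z}=x_{z,p}$.

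First I would loop over each $p\in{\cal B}$. For the current $p$, compute $x_{p,z}$ for all $z\in{\cal B}\cup{\cal R}$ from the closed form for the centre of $C_{p,z}$, in $O(n)$ time, and split the red set into ${\cal R}^{+}_{p}=\{r\in{\cal R}:x_r>x_p\}$ and ${\cal R}^{-}_{p}=\{r\in{\cal R}:x_r<x_p\}$. By Claim~\ref{clm:pqr}, a red point $r\in{\cal R}^{+}_{p}$ lies on or inside $C_{p,q}$ precisely when $x_{p,q}\geq x_{p,r}$, while $r\in{\cal R}^{-}_{p}$ lies on or inside $C_{p,q}$ precisely when $x_{p,q}\leq x_{p,r}$. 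Hence I sort the values $\{x_{p,r}:r\in{\cal R}^{+}_{p}\}$ into an increasing array $A^{+}$ and $\{x_{p,r}:r\in{\cal R}^{-}_{p}\}$ into an increasing array $A^{-}$ --- costing $O(n\log n)$ for this $p$ --- and then, for each $q\in{\cal B}$, the sought count is
\[
\bigl|\{a\in A^{+}:a\leq x_{p,q}\}\bigr|+\bigl|\{a\in A^{-}:a\geq x_{p,q}\}\bigr|,
\]
each term being read off by one binary search in $O(\log n)$ time.

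Adding up the work: for each of the $O(n)$ choices of $p$ I spend $O(n\log n)$ to build $A^{+}$ and $A^{-}$ and $O(\log n)$ for each of the $O(n)$ points $q\in{\cal B}$, i.e.\ $O(n\log n)$ in total per $p$; over all $p$ this is $O(n^2\log n)$, as claimed. Correctness of the count for a pair $(p,q)$ is immediate from Claim~\ref{clm:pqr} once one notes that its inequalities are non-strict and therefore match ``on or inside''; the only point needing care is coincident $x$-coordinates, which the general-position assumption removes (or which one treats by the usual tie-breaking).

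The main obstacle --- in fact essentially the whole idea --- is the reduction itself: recognising that after splitting the red points according to the sign of $x_r-x_p$, membership of $r$ in $C_{p,q}$ becomes \emph{monotone} in the scalar $x_{p,q}$, so that a sorted array plus binary search suffices and no genuine range-searching structure is needed. Everything after that observation is bookkeeping.
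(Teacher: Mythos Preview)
Your proposal is correct and follows essentially the same approach as the paper: for each fixed $p\in{\cal B}$, use Claim~\ref{clm:pqr} to split the red points by the sign of $x_r-x_p$, reduce containment in $C_{p,q}$ to a one-sided inequality on $x_{p,q}$ versus $x_{p,r}$, and answer all queries $q$ via a sorted array. The only cosmetic difference is that the paper merges the blue centres $\{x_{p,q}\}$ into the sorted red lists and answers all queries with a single linear pass, whereas you keep the red arrays separate and binary-search each $q$; both cost $O(n\log n)$ per $p$ and $O(n^2\log n)$ overall.
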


\begin{proof}
We first describe the algorithm.
\begin{enumerate}
	\item Sort the point sets ${\cal B}\cup{\cal R}$ based on their $x$-coordinates from left to right.
       \item For each $p \in {\cal B}$, compute three lists: $C_{p,{\cal B}}=\{x_{p,q}|q \in {\cal B} \setminus \{p\}\}$, $C_{p,{\cal R},1}=\{x_{p,r}|r \in {\cal R} \text{ and } x_p<x_r\}$,
	$C_{p,{\cal R},2}=\{x_{p,r}|r \in {\cal R} \text{ and } x_p>x_r\}$.
	\item For each $p$, sort the lists $L_{p,1}=C_{p,{\cal B}} \cup C_{p,{\cal R},1}$
	and $L_{p,2}=C_{p,{\cal B}} \cup C_{p,{\cal R},2}$.
	\item For each $p$, do the following:
	by making a single pass over $L_{p,1}$, compute for every $q \in {\cal B} \setminus \{p\}$, the value $N_{p,q,1}$, which is defined to be the number of elements of $C_{p,{\cal R},1}$ that appear before $x_{p,q}$ in $L_{p,1}$.
	\item For each $p$, compute for every $q \in {\cal B} \setminus \{p\}$, the value $N_{p,q,2}$, which is defined to be the number of elements of $C_{p,{\cal R},2}$ that appear after $x_{p,q}$ in $L_{p,2}$.
	\item For each $p,q$, the desired count (i.e., the number of red points covered by the disk $C_{p,q}$) is $N_{p,q,1}+N_{p,q,2}$. 
  \item To examine the scenario where only a single blue point resides on the circle, we construct a list in the following manner:
        \begin{itemize}
            \item For each $p\in {\cal B}$, we select an arbitrary point $p_{temp}$ that lies on the circle $C_p$.
            \item Let $C_{\cal B}=\{(p,p_{temp})|\text{ $p\in {\cal B}$ and $p_{temp}$ is an arbitrary point lying on $C_p$}\}$ be a list.
            \item Now, assign $C_{p,{\cal B}}=\{x_{p,q}|(p,q)\in C_{\cal B}\}$ in step 2 and compute the lists $C_{p,{\cal R},1}$ and $C_{p,{\cal R},2}$, then repeat the remaining steps till step 6.
        \end{itemize}
    \item Lastly, we determine the circle that encloses the maximum number of blue points and none of the red points by answering circular range counting queries for every circle $C_{p,q}$ which has the red count $N_{p,q,1}+N_{p,q,2}=0$
\end{enumerate}

\noindent{\bf Analysis:}
The correctness follows from Claim \ref{clm:pqr}.
The running time is dominated by steps 3, 4, 5, and 8.
Step 3 takes time $O(n \log n )$ for a single point $p$ and hence total time $O(n^2 \log n)$; steps 4 and 5 take time $O(n^2)$ each. The time complexity of Step 8 is $O(n^2\log{n})$ due to the repetition of the algorithm to determine the maximum number of blue points (i.e., the value $N_{p,q,1}+N_{p,q,2}$ is maximum for the blue points) enclosed by each circle $C_{p,q}$ satisfying $N_{p,q,1}+N_{p,q,2}=0$ for the points in ${\cal R}$.

\end{proof}

\begin{theorem}
    The \textsc{MaxBlue-NoRed} problem for $k=1$ can be solved in $O(n^2\log{n})$ time.
\end{theorem}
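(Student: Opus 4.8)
The plan is to combine the structural observation that was already established (the optimal disk is determined either by a pair of points or by a single blue point with its center directly below it) with the preceding Proposition, which shows that for all $\binom{|{\cal B}|}{2}$ candidate circles $C_{p,q}$ (and, via Step 7, for all $|{\cal B}|$ single-blue-point circles $C_p$) we can compute the red-point count in total time $O(n^2\log n)$. The only candidate disks that can be optimal are these $O(n^2)$ circles; among them we must discard every one that contains a red point in its interior and, among the survivors, return the one enclosing the most blue points (breaking ties by smaller radius, as dictated by the \textsc{CSofl} objective). So the theorem is essentially a matter of observing that the Proposition already does the expensive work and that the remaining bookkeeping fits within the same $O(n^2\log n)$ budget.

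Concretely, I would proceed as follows. First, invoke the Proposition to obtain, for every pair $p,q\in{\cal B}$, the quantity $N_{p,q,1}+N_{p,q,2}$, the number of red points on or inside $C_{p,q}$; and, via Step 7 of its proof, the analogous red counts for each single-blue-point circle $C_p$. Second, restrict attention to the sublist of circles for which this red count is zero — these are exactly the feasible candidate disks. Third, for each such feasible circle, compute the number of blue points it encloses; by the same sorting-and-scanning technique used in the Proposition (now counting elements of ${\cal B}$ rather than ${\cal R}$ relative to each $x_{p,q}$), or equivalently by a circular range counting query of the type cited as \cite{chan2016}, this can be done for all $O(n^2)$ circles in $O(n^2\log n)$ time. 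Fourth, take the maximum blue count over all feasible circles, using the radius $y_p$-type value to break ties in favour of the smaller disk, and return the corresponding center on $\ell$; if no feasible circle exists, report infeasibility. Finally, I would argue correctness: by the structural observation, some optimal disk has two of the given points on its boundary or a single blue point at its top, hence coincides with one of the enumerated circles, so the maximum computed over the enumerated feasible set equals the true optimum.

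The running-time accounting is the routine part: Steps 3, 4, 5 and 8 of the Proposition already cost $O(n^2\log n)$, the additional blue-count pass costs another $O(n^2\log n)$, and everything else (sorting, the single-blue-point list of Step 7, selecting the maximum) is subsumed. The only point requiring a little care — and the place I would expect a careful reader to push back — is making sure the three disk types enumerated in Observation~\ref{ob2} and the $k=1$ specialisation really are exhaustive for the optimum here: in particular that a disk determined by two red points, or by one red point alone, can always be shrunk (or its center perturbed along $\ell$) until it is either empty or determined by a blue point, without decreasing the blue count or violating the no-red-point constraint. This is exactly the content of the earlier configuration analysis, so I would cite Observation~\ref{ob2} and the discussion preceding Theorem~\ref{mb} rather than reprove it, and the theorem then follows immediately from the Proposition.
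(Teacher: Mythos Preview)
Your proposal is correct and follows essentially the same approach as the paper: the theorem is an immediate corollary of the preceding Proposition, whose Step~8 already filters to circles with red count zero and computes the blue counts via the same $O(n^2\log n)$ scanning/range-counting argument you describe. The paper gives no separate proof of the theorem beyond the Proposition itself, so your write-up is in fact slightly more explicit (in particular about tie-breaking by radius and about citing Observation~\ref{ob2} for exhaustiveness of the candidate circles) than what appears there.
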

\subsection{The \textsc{AllBlue-MinRed} problem for $k=1$}
In \cite{bitner2010}, the problem of finding the largest and smallest disk that covers all blue points and as few red points as possible is studied in its unrestricted version, and in \cite{cheung2010}, they gave linear time (expected) algorithm based on linear programming. On the other hand, in \cite{bereg2015}, the problem is studied when the center of the disk is restricted to a line segment, which has the same time complexity as our farthest-point Voronoi diagram based algorithm. This problem has many bichromatic variants studied in the Ph.D. thesis \cite{armaselu2017t}. Next we describe our algorithm which is based on the farthest-point Voronoi diagram.


We first construct a farthest point Voronoi diagram ${\cal FVD}$ for the set of blue points. Then, we find all the intersection points of Voronoi edges with $\ell$. Since every Voronoi edge corresponds to the farthest pair of two points, we can determine a disk centered at an intersection point of the Voronoi edge and $\ell$, such that the disk's boundary passes through the respective two points. We find all such disks for each intersection point, which are candidate locations for a facility in any feasible solution to the \textsc{AllBlue-MinRed} problem for $k=1$. We pick a disk that covers the minimum number of red points among these disks. To this end, we again employ range searching algorithms of \cite{chan2016}.

\begin{figure}[!ht]
	\begin{centering}
		\includegraphics[width=\textwidth]{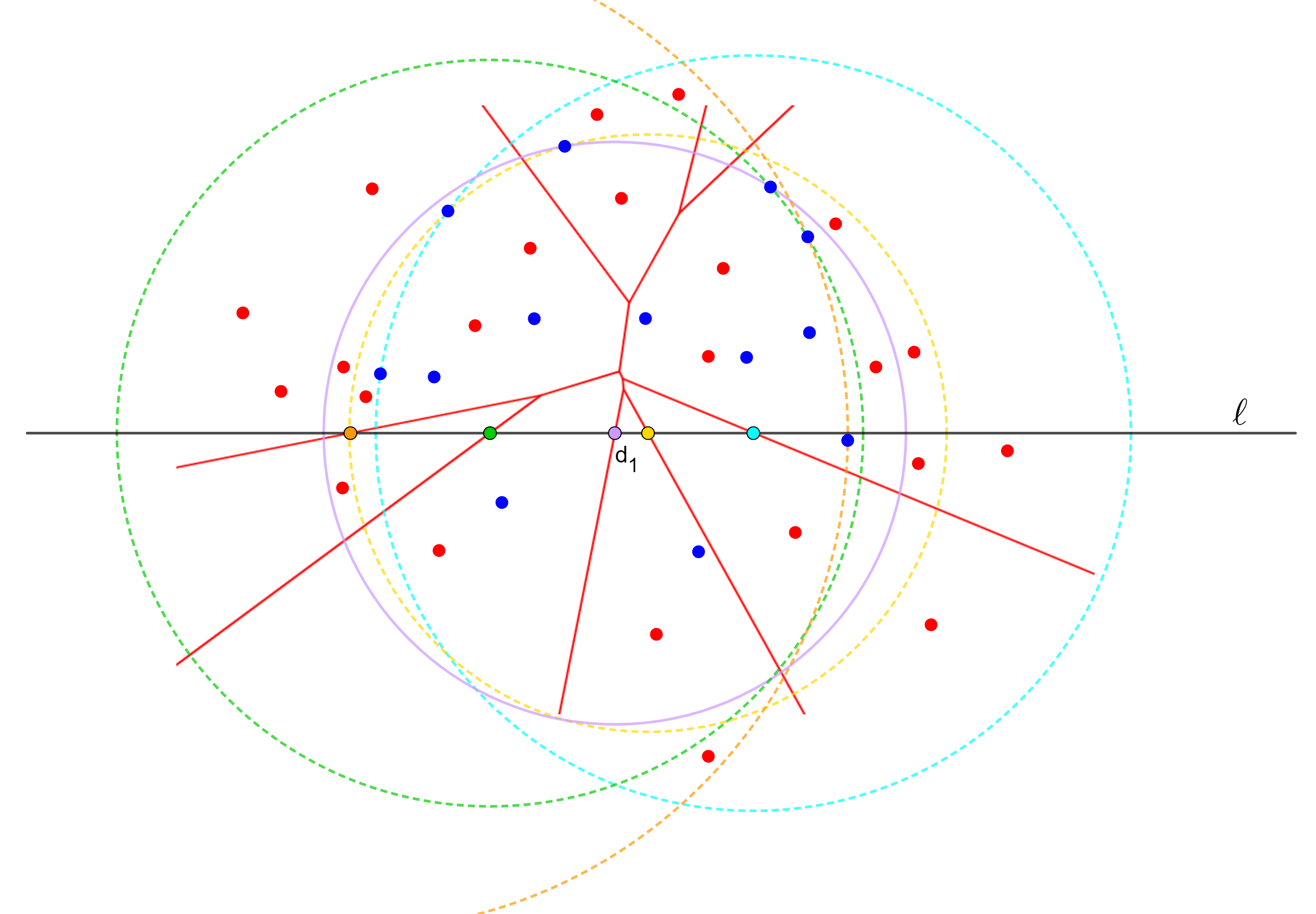}
		\caption{The pink disk $d_1$ covers all blue points with a fewer number of red points.}
		\label{figb1}
	\end{centering}
\end{figure}

\begin{theorem}
	We can solve the \textsc{AllBlue-MinRed} problem for $k=1$ in $O(n\log{n})$ time.
\end{theorem}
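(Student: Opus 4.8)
The plan is to reduce the problem to a small number of candidate disk centers on $\ell$ and to evaluate each candidate efficiently using range searching, so that the whole procedure runs in $O(n\log n)$ time. First I would construct the farthest-point Voronoi diagram $\mathcal{FVD}$ of the blue set $\mathcal{B}$; this takes $O(|\mathcal{B}|\log|\mathcal{B}|) = O(n\log n)$ time and has $O(|\mathcal{B}|)$ edges, vertices, and cells. The key geometric observation is that a disk centered at a point $c\in\ell$ covers \emph{all} of $\mathcal{B}$ if and only if its radius is at least the distance from $c$ to the farthest blue point, which is exactly the function whose lower envelope over the blue points is encoded by $\mathcal{FVD}$ restricted to $\ell$. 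Consequently the minimal radius needed to cover all blue points, as a function of the center $c$ along $\ell$, is a piecewise-hyperbolic convex function whose breakpoints are precisely the intersections of Voronoi edges with $\ell$; there are only $O(n)$ such breakpoints, and they partition $\ell$ into $O(n)$ intervals, on each of which a fixed blue point is the farthest.

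Next I would argue that it suffices to test only $O(n)$ candidate centers. On each interval of $\ell$ delimited by consecutive Voronoi-edge crossings, the farthest blue point is fixed, so the covering disk is smallest at the point of that interval closest (in the appropriate sense) to that blue point; more carefully, within each Voronoi cell of $\mathcal{FVD}$ the disk of minimum radius still covering all blue points is obtained by projecting onto $\ell$ the closest point to the defining blue vertex, and the relevant candidates are the $O(n)$ Voronoi-edge/$\ell$ crossings together with the $O(|\mathcal{B}|)$ projections of blue points onto $\ell$ that fall in the right cell. For each such candidate center $c$ we take the disk $d_c$ centered at $c$ with radius equal to the distance from $c$ to its farthest blue point; by construction $d_c$ covers all of $\mathcal{B}$, and any all-blue-covering disk centered on $\ell$ has radius at least that of some $d_c$, so the optimum is attained among these $O(n)$ disks. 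Among them we want the one covering the fewest red points.

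To evaluate the red count of each of the $O(n)$ candidate disks within the overall $O(n\log n)$ budget, I would use the circular range counting / range searching data structure of \cite{chan2016} on the red point set $\mathcal{R}$: build it once in $O(|\mathcal{R}|\log|\mathcal{R}|)$ time, then answer each of the $O(n)$ queries ``how many red points lie inside $d_c$'' in $O(\log n)$ time (or $O(\mathrm{polylog}\,n)$, which is still within budget). We return the candidate disk minimizing this count; if the minimum is achieved, that is the answer, and feasibility (a disk covering all blue points centered on $\ell$) is guaranteed because taking the radius large enough always works, so $\mathcal{FVD}\cap\ell$ is nonempty in the relevant sense whenever $\mathcal{B}\ne\emptyset$. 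Summing up: $O(n\log n)$ to build $\mathcal{FVD}$, $O(n)$ to extract the crossings and candidate centers, $O(n\log n)$ to build the red range-counting structure, and $O(n\log n)$ for the $O(n)$ queries — total $O(n\log n)$.

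The main obstacle I anticipate is the candidate-set argument: showing that only $O(n)$ centers need to be examined and that the minimum-radius all-blue disk over each piece of $\ell$ is captured by one of them. This requires care because the minimum over an interval of $\ell$ of the ``farthest blue point distance'' is a convex piecewise-hyperbolic function, and one must check that its minimizer on each piece is either an endpoint (a Voronoi-edge crossing) or the foot of the perpendicular from the governing blue point to $\ell$ — and that the latter indeed lies in the corresponding cell when it is the true minimizer. Handling degeneracies (several blue points equidistant, Voronoi edges lying along or tangent to $\ell$, blue points with equal $x$-coordinates) is routine but must be addressed so that the candidate list stays of size $O(n)$. Once that structural claim is in place, the running-time bound follows immediately from the complexity of $\mathcal{FVD}$ and the range-searching results already cited.
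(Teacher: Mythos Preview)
Your proposal follows essentially the same route as the paper: construct the farthest-point Voronoi diagram of $\mathcal{B}$ in $O(n\log n)$ time, intersect its $O(n)$ edges with $\ell$ to obtain the candidate centers, preprocess $\mathcal{R}$ with the circular range-counting structure of \cite{chan2016}, and query each of the $O(n)$ candidate disks in $O(\log n)$ time to pick the one enclosing the fewest red points. The paper is in fact terser than you are---it takes only the Voronoi-edge/$\ell$ crossings as candidates (omitting the perpendicular-foot centers you add) and justifies correctness merely by noting that each such disk covers all of $\mathcal{B}$, so the candidate-set concern you flag is not addressed any more fully there.
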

\begin{proof}
The construction of the ${\cal FVD}$ concerning the blue points will take $O(n\log{n})$ time. There will be, at most, $n$ Voronoi edges, which will intersect with the line $\ell$. We need to find a disk among the disks centered at all candidate locations, as identified above. This disk should cover the minimum number of red points. To do this, we first pre-process the red points in $O(n\log{n})$ time \cite{chan2016}. Each query corresponding to a disk centered at a candidate location will take $O(\log{n})$ time \cite{chan2016}. Hence, the total running time is $O(n\log{n})$.
The algorithm's correctness follows because the Voronoi edge on which we center the disk will correspond to the farthest pair of blue points, and the disk whose boundary passes through this pair will cover all the blue points.
\end{proof}

\section{\textsc{Sofl} on $t$-lines}
Let us consider a given set ${\cal B}$ of blue points and a set ${\cal R}$ of red points, which are positioned around $t$ parallel lines denoted as ${\ell_1,\ell_2,\dots,\ell_t}$ in the plane. These lines may have arbitrary vertical displacements. Each point $p_i\in {\cal B}\cup{\cal R}$ is assigned a weight denoted as $w_i$, where $w_i>0$ if $p_i\in {\cal B}$ and $w_i<0$ if $p_i\in {\cal R}$. The cardinality of the set ${\cal B}\cup{\cal R}$ is denoted as $n$, and the interior of any geometric object $d$ is represented as $d^0$ (excluding its boundary $\partial d$).

The objective is to pack $k$ non-overlapping congruent disks, denoted as $d_1$, $d_2$, $\ldots$, $d_k$, with the smallest possible radius. These disks must be centered on the parallel lines closest to the points covered by each disk. The goal is to maximize the sum of the weights of the points covered by the interior of the disks. This sum is represented as $\sum\limits_{j=1}^k\sum\limits_{{i:\exists p_i\in{\cal R}, p_i\in d_j^0}}w_i+\sum\limits_{j=1}^k\sum\limits_{{i:\exists p_i\in{\cal B}, p_i\in d_j}}w_i$.

We may consider this as a generalization of the \textsc{Sofl} problem, where $t$ horizontal lines are present, and facilities can be centered on any of these lines. Following a similar approach as in Section \ref{sec3}, we obtain all the candidate radii ${\cal L}_{\textsc{can}}$ independently for each of the $t$ lines and let us denote it as ${\cal L}_{\textsc{tcan}}$. Note that the cardinality of ${\cal L}_{\textsc{tcan}}$ is $O(tn^2)$. Hence we have the following lemma.
\begin{lemma}\label{tslemma1}
    $|{\cal L}_{\textsc{tcan}}|=O(tn^2)$
\end{lemma}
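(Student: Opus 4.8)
The plan is to observe that Lemma \ref{tslemma1} is an immediate bookkeeping consequence of the candidate-radii analysis already carried out in Section \ref{sec3}, applied line by line. First I would recall that in Section \ref{sec3} we established (Lemma \ref{lemma7}) that for a single horizontal line, the set of candidate radii ${\cal L}_{\textsc{can}}$ has size $O(n^2)$, arising from the constantly many critical-configuration types: the zero radius (Configuration-0), the $O(n)$ radii of the form $r_{\textsc{can}}=y_{p_i}$ for $p_i\in{\cal B}$ (Configuration-1), and the $O(n^2)$ radii determined by a blue--blue pair or a blue--red pair (Configuration-2). The same argument applies verbatim when the single line $\ell$ is replaced by any one of the $t$ parallel lines $\ell_1,\ldots,\ell_t$: for a disk constrained to be centered on $\ell_m$, its optimal radius is again pinned down by at most two of the input points, via exactly the same perturbation/shrinking argument as in Observation \ref{ob2}, and the $y$-coordinates in the formulas are simply measured relative to $\ell_m$ instead of $\ell$.

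The key steps, in order, are: (1) fix a line $\ell_m$, $m\in[t]$, and run the Section \ref{sec3} construction with $\ell_m$ in place of $\ell$; this produces a candidate set ${\cal L}_{\textsc{can}}^{(m)}$ of size $O(n^2)$ by Lemma \ref{lemma7}. (2) Note that in an optimal solution to the $t$-line problem, every one of the $k$ disks is centered on some $\ell_m$, and its radius must be a critical radius with respect to the points determining it, hence belongs to ${\cal L}_{\textsc{can}}^{(m)}$ for the line $\ell_m$ on which it sits; since all disks are congruent, the common optimal radius lies in $\bigcup_{m=1}^t {\cal L}_{\textsc{can}}^{(m)}$. (3) Set ${\cal L}_{\textsc{tcan}}=\bigcup_{m=1}^t {\cal L}_{\textsc{can}}^{(m)}$; then $|{\cal L}_{\textsc{tcan}}|\le \sum_{m=1}^t |{\cal L}_{\textsc{can}}^{(m)}| = t\cdot O(n^2)=O(tn^2)$, which is the claimed bound.

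There is essentially no hard step here — the lemma is a counting statement and the content is entirely inherited from Lemma \ref{lemma7}. The only point that warrants a sentence of care is the justification in step (2) that the optimal radius for the $t$-line problem is indeed a \emph{critical} radius of one of the lines, i.e.\ that one cannot have an optimal packing in which no disk has two input points pinning its radius on the line it lies on. This follows exactly as in the proof of Observation \ref{ob2}: if some disk's radius were not so determined, we could shrink that disk (or perturb its center along its own line) without decreasing the covered weight, repeating until at least one disk is pinned by two points on its line; the resulting radius is then a member of ${\cal L}_{\textsc{tcan}}$, and because all $k$ disks are congruent this is the radius we must consider. I would therefore state the lemma's proof as: ``Applying the analysis of Section \ref{sec3} (in particular Lemma \ref{lemma7}) independently to each of the $t$ lines yields $O(n^2)$ candidate radii per line; taking the union over all $t$ lines gives $|{\cal L}_{\textsc{tcan}}|=O(tn^2)$.''
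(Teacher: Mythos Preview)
Your proposal is correct and matches the paper's approach exactly: the paper simply remarks (in the paragraph preceding the lemma, with no separate proof) that one applies the Section~\ref{sec3} analysis independently to each of the $t$ lines, obtaining $O(n^2)$ candidate radii per line and hence $O(tn^2)$ in total. Your step~(2) justifying that the optimal radius is critical on some line is a welcome addition that the paper leaves implicit.
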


Next, we fix a radius $r_{\textsc{can}}\in {\cal L}_{\textsc{tcan}}$. We can transform the problem into finding the minimum weight $k$-link path in a directed acyclic graph (DAG) $G(V', E')$, as discussed in Section \ref{sec4}. However, the cardinality of the set $V'$ is $O(nkt^2)$, since each point $p_i\in {\cal B}\cup {\cal R}$ can create an influence interval on each of the $t$ lines, resulting in $O(nt)$ endpoints of the influence intervals and adding $O(kt)$ additional points (see Figure \ref{tlines}). Figure \ref{tlines} depicts the candidate locations on $\ell_{i+1}$ and $\ell_{i-1}$, located at a distance of $2\lambda$ to the right of $p_{\ell_i}$. Similarly, the mirror case can be considered for the point situated at a distance of $2\lambda$ to the left of $p_{\ell_i}$ on $\ell_{i+1}$ and $\ell_{i-1}$.
\begin{figure}[!htb]
	\begin{centering}
		\includegraphics[scale=0.8]{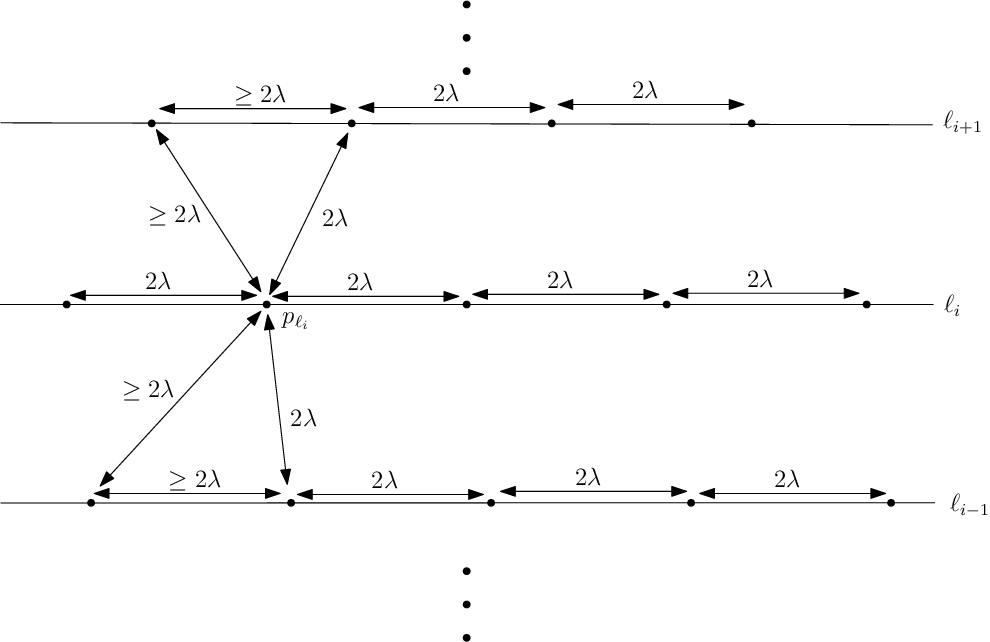}
		\caption{Candidate locations corresponding to the endpoint of the infeasible region $p_{\ell_i}$ and candidate radius $\lambda$.}
		\label{tlines}
	\end{centering}
\end{figure}

Without loss of generality, we assume that all the points in $V'$ have distinct $x$-coordinates. We can construct $G'$ in $O(n^2k^2t^4)$ time by employing the sweeping technique, specifically sweeping from left to right. Therefore, the following lemma holds.
\begin{lemma}\label{tslemma2}
      The DAG $G'$ on $t$-lines can be constructed in $O(n^2k^2t^4)$ time.  
\end{lemma}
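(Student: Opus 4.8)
The plan is to mirror the construction from Section \ref{sec4}, being careful about how many objects each demand point spawns when there are $t$ parallel lines instead of one. First I would recall that for a fixed candidate radius $\lambda = r_{\textsc{can}} \in {\cal L}_{\textsc{tcan}}$, each demand point $p_i \in {\cal B} \cup {\cal R}$ induces an influence interval on \emph{each} of the $t$ lines (it contributes two endpoints per line, or none if the line is too far from $p_i$), giving $O(nt)$ interval endpoints. To each such endpoint, as in the one-line case, we attach $2(k-1)$ auxiliary centers at multiples of $2\lambda$ along that same line, plus the terminal-padding points; summed over all endpoints this is $O(nt \cdot kt) = O(nkt^2)$ vertices, consistent with the stated $|V'| = O(nkt^2)$. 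I would state this vertex bound first, as it controls everything downstream.

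Next I would count edges and bound the per-edge cost. The graph $G'$ is again a complete DAG on the $x$-sorted vertex order (one must fix an arbitrary tie-break on $x$-coordinates, since points on different lines can share an $x$-value), so $|E'| = O(|V'|^2) = O(n^2k^2t^4)$. The main work is assigning each edge its weight $-(w(d_i)+w(d_j))$, which requires knowing $w(d_i)$, the total weight of demand points covered by a disk of radius $\lambda$ centered at vertex $i$ — but the disk is centered on whichever line $i$ belongs to, and a disk centered on $\ell_a$ may cover points near any of the $t$ lines. The clean way to get all the $w(d_i)$ values is a plane sweep from left to right over the $O(nkt^2)$ candidate centers together with the $O(n)$ demand points: maintain, for each of the $t$ lines, the set of active demand points whose influence intervals on that line are currently open, so that when the sweep reaches a candidate center on line $\ell_a$ one reads off $w(d_i)$ in the appropriate time. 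Each event touches $O(t)$ bookkeeping structures, and there are $O(nkt^2)$ events, so the sweep produces all $w(d_i)$ in $O(nkt^2 \log(nkt))$ time, which is dominated by the $O(n^2k^2t^4)$ edge-instantiation step. Writing out all $O(n^2k^2t^4)$ edges with their (now $O(1)$-computable) weights gives the claimed $O(n^2k^2t^4)$ construction time.

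Finally I would remark that the concave Monge argument from the one-line case (the observation preceding Theorem \ref{thm10}) carries over verbatim: the case analysis on whether the four vertices $i < i+1 < j < j+1$ are pairwise $\geq 2\lambda$ apart only uses the additive form $w(i,j) = -(w(d_i)+w(d_j))$ and the monotone $+\infty$ penalty, neither of which cares how many lines there are, so $G'$ still satisfies the concave Monge property. Hence Lemma \ref{tslemma2} follows, and together with Lemma \ref{tslemma1} and Theorem \ref{thm10} one gets the advertised $O(n^4k^2t^5)$ bound for the whole $t$-line problem by the same outer loop over ${\cal L}_{\textsc{tcan}}$.

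I expect the genuinely delicate point to be the edge count / construction-time accounting rather than any deep idea: one has to be honest that $|V'| = O(nkt^2)$ already contains a $t^2$ (one $t$ from "each point hits each line", one $t$ from "each endpoint spawns $\Theta(kt)$ padding points are themselves spread over… no — the padding stays on one line, so the second $t$ must instead come from the influence-interval endpoints again), so I would double-check the exponent of $t$ in $|V'|$ carefully before squaring it, since an off-by-one in the exponent of $t$ propagates into both Lemma \ref{tslemma2} and the final running time. The plane-sweep for the weights is the only algorithmic ingredient beyond the one-line case, and it is routine once the event set and the per-line active structures are set up correctly.
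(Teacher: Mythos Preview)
Your overall structure matches the paper's one-sentence proof: once $|V'|=O(nkt^2)$ is granted, the complete DAG has $|E'|=O(|V'|^2)=O(n^2k^2t^4)$ edges and can be written out (with weights) in that time, exactly as in Lemma~\ref{lemma8} for one line. Your sweep to precompute the $w(d_i)$ is more detail than the paper gives and is harmless.

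The part you have not actually established---and you flag it yourself---is $|V'|=O(nkt^2)$. Your sentence says the padding centers for an endpoint on $\ell_a$ lie ``along that same line,'' which yields only $O(k)$ padding points per endpoint and hence $|V'|=O(nkt)$; you then multiply by $kt$ anyway and later admit you cannot locate the second factor of $t$. The resolution (this is precisely what the paragraph around Figure~\ref{tlines} is describing) is that the padding centers attached to an endpoint on $\ell_a$ are placed on \emph{all} $t$ lines, not only on $\ell_a$: two consecutive disks in an optimal packing can sit on different lines, so a disk tangent to one centered at an influence-interval endpoint on $\ell_a$ may itself be centered on some $\ell_b$ with $b\neq a$, at a point that is not an influence-interval endpoint on $\ell_b$. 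That gives $O(kt)$ padding centers per endpoint and hence $|V'|=O(nt)\cdot O(kt)=O(nkt^2)$. Separately, your closing claim that the concave Monge argument carries over verbatim is unsafe and unnecessary: with vertices on different lines sorted only by $x$-coordinate, $dist(i+1,j)$ need not be the smallest of the four pairwise distances, so one can have $w(i,j)=+\infty$ on the left while the right-hand side stays finite, breaking the inequality. The paper never invokes Monge in the $t$-line section (its $O(n^4k^2t^5)$ bound is simply $|{\cal L}_{\textsc{tcan}}|\cdot O(n^2k^2t^4)$), so you should drop that paragraph rather than rely on it.
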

\begin{proof}
Follows from the Lemma \ref{lemma8} as the cardinalities $|V'|=O(nkt^2)$ and $|E'|=O(n^2k^2t^4)$.
\end{proof}

\begin{theorem}
    The \textsc{Sofl} problem of $t$-lines can be solved exactly in $O(n^4k^2t^5)$ time.
\end{theorem}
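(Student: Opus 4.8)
The plan is to assemble the running time from the three ingredients developed in this section, exactly mirroring the structure of the proof of Theorem~\ref{thm11} but with the cardinalities inflated by the extra powers of $t$. First I would recall that, by Lemma~\ref{tslemma1}, the set of candidate radii ${\cal L}_{\textsc{tcan}}$ has size $O(tn^2)$, so the outer loop over candidate radii runs $O(tn^2)$ times. For each fixed $r_{\textsc{can}}\in {\cal L}_{\textsc{tcan}}$, the DAG $G'$ can be built in $O(n^2k^2t^4)$ time by Lemma~\ref{tslemma2}, and it has $|V'|=O(nkt^2)$ vertices and $|E'|=O(n^2k^2t^4)$ edges. The edge weights satisfy the concave Monge property for exactly the same reason as in Section~\ref{sec4}: the weight of an edge is $+\infty$ when the two centers are closer than $2\lambda$ and is $-(w(d_i)+w(d_j))$ otherwise, and this additive-plus-threshold form is unaffected by the presence of several lines, since the centers all have distinct $x$-coordinates and are ordered accordingly. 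Hence Theorem~\ref{thm10} (the Aggarwal–Park machinery of \cite{agga1993}) applies with $nk$ replaced by $|V'|=O(nkt^2)$, giving a minimum-weight $(k+1)$-link path in time $O\!\left(nkt^2\sqrt{k\log(nkt^2)}\right)$.

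Next I would combine these: per candidate radius the cost is dominated by constructing $G'$, i.e. $O(n^2k^2t^4)$, since this term dominates the $k$-link path computation. Multiplying by the $O(tn^2)$ candidate radii yields $O(tn^2)\cdot O(n^2k^2t^4)=O(n^4k^2t^5)$, which is the claimed bound. Finally I would argue correctness along the same lines as Theorem~\ref{thm11}: for the optimal radius $r_{opt}$, the minimum-weight $(k+1)$-link path from $s$ to $t$ places its $k$ internal vertices at points on the $t$ lines that are the centers of $k$ pairwise non-overlapping congruent disks of radius $r_{opt}$ covering a point set of total weight $-w(\Pi^*_{(k+1)}(s,t))/2$; ranging over all candidate radii and taking the smallest radius achieving the maximum covered weight returns an optimal solution, exactly as in the single-line case.

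The one point that needs a little care — and which I would expect to be the main obstacle to a fully rigorous write-up — is verifying that the reduction of Section~\ref{sec4} still captures \emph{all} optimal placements when facilities may sit on different lines. In the single-line case the argument was that at least one disk is centered at an endpoint of an influence interval and the others can be reached by stepping in multiples of $2\lambda$; with $t$ lines one must check that the $O(kt)$ extra candidate points added per influence-interval endpoint (stepping by $2\lambda$ on \emph{each} of the $t$ lines, see Figure~\ref{tlines}) suffice to realize any optimal configuration, and that the pairwise-distance constraint $dist(i,j)\ge 2\lambda$ — now measured between points that may lie on different parallel lines — still correctly encodes non-overlap of congruent disks of radius $\lambda$. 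Since two congruent disks of radius $\lambda$ are non-overlapping precisely when their centers are at Euclidean distance at least $2\lambda$ regardless of which lines the centers lie on, the edge-weight rule transfers verbatim; I would state this explicitly as the justification that Lemma~\ref{tslemma2} and the Monge property carry over. Modulo this bookkeeping the theorem follows immediately from Lemmas~\ref{tslemma1} and~\ref{tslemma2} and Theorem~\ref{thm10}, in the same way Theorem~\ref{thm11} followed from Lemmas~\ref{lemma7} and~\ref{lemma8} and Theorem~\ref{thm10}.
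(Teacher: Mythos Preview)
Your proposal is correct and follows the same approach as the paper: the paper's own proof is a one-liner that simply multiplies the $O(tn^2)$ candidate radii from Lemma~\ref{tslemma1} by the $O(n^2k^2t^4)$ per-radius DAG-construction cost from Lemma~\ref{tslemma2} to obtain $O(n^4k^2t^5)$. Your write-up is more detailed (explicitly invoking Theorem~\ref{thm10}, checking the Monge property, and discussing correctness for the multi-line reduction), but the underlying argument is identical.
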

\begin{proof}
    Follows from the Lemma \ref{tslemma1} and Lemma \ref{tslemma2} since there are $O(n^2t)$ candidate radii and the total time is $O(n^2t)\times O(n^2k^2t^4)=O(n^4k^2t^5)$.
\end{proof}

\section{Discrete \textsc{Sofl} with all facility sites in convex position}
Suppose we are given a set $\cal B$ of blue points, a set $\cal R$ of red points, and a set ${\cal F}$ of $s$ candidate locations in convex position; all these three sets are in the plane. Let the weight of a given point $p_i\in {\cal B}\cup{\cal R}$ be $w_i>0$ if $p_i\in {\cal B}$ and $w_i<0$ if $p_i\in {\cal R}$, $|{\cal B}\cup{\cal R}|=n$, and $d^0$($=d\setminus\partial d$) be the interior of any geometric object $d$. We wish to pack $k$ non-overlapping congruent disks $d_1$, $d_2$, \ldots, $d_k$ of minimum radius, centered at points in ${\cal F}$ such that $\sum\limits_{j=1}^k\sum\limits_{\{i:\exists p_i\in{\cal R}, p_i\in d_j^0\}}w_i+\sum\limits_{j=1}^k\sum\limits_{\{i:\exists p_i\in{\cal B}, p_i\in d_j\}}w_i$ is maximized, i.e., the sum of the weights of the points covered by $\bigcup\limits_{j=1}^kd_j$ is maximized.

The above problem is a discrete variation of the \textsc{Sofl} problem (\textsc{Dsofl}) because a finite number of candidate facility sites (in convex position) are pre-given. Even though it is the discrete version of the \textsc{Sofl} problem, similar to the continuous line case, we know that there exists only a constant number of critical configuration types for the points in $\mathcal{R} \cup \mathcal{B}$ and candidate facilities in $\mathcal{F}$. It follows from the latter that we also have a finite number of candidate radii here. Let $\mathcal{L}_{\textsc{Dcan}}$ denote the set of all candidate radii.

\begin{lemma}\label{dslemma1}
$|\mathcal{L}_{\textsc{Dcan}}|=O(ns)$.
\end{lemma}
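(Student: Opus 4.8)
The plan is to count the candidate radii by classifying the critical configurations exactly as was done in Section \ref{sec3}, but now the disk centers are constrained to lie in the fixed finite set $\mathcal{F}$ of $s$ candidate sites rather than anywhere on a line. First I would recall from the discussion preceding the lemma that there are only $O(1)$ critical-configuration types: a disk whose radius is pinned by no demand point (radius zero, contributing nothing), by a single blue point on its boundary, by a pair of blue points on its boundary, or by one blue and one red point on its boundary. (As in Observation \ref{ob2}, a pair of red points never pins the radius, since one can shrink the disk until a blue point is hit; and more than two boundary points reduce to one of these cases.) The key difference from the line case is that a disk centered at a \emph{fixed} site $c \in \mathcal{F}$ is already determined by its center and a single point on its boundary — there is no one-parameter family of disks through a given point with center on a line — so the combinatorics collapse.

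The main counting step: for the single-blue-point type, a candidate radius is $\text{dist}(c, p)$ for some $c \in \mathcal{F}$ and some $p \in \mathcal{B}$; there are at most $s \cdot |\mathcal{B}| = O(ns)$ such values. For the blue–red (or blue–blue) pair type, observe that requiring \emph{two} distinct points $p, q$ to lie on the boundary of a disk centered at a \emph{given} $c \in \mathcal{F}$ forces $\text{dist}(c,p) = \text{dist}(c,q)$, i.e.\ $c$ lies on the perpendicular bisector of $pq$; for a generic configuration this happens for at most a bounded number of the $s$ sites, but even without genericity, each \emph{ordered} pair $(c, p)$ with $c \in \mathcal{F}$, $p \in \mathcal{B} \cup \mathcal{R}$ already fixes the radius $\text{dist}(c,p)$, and the second point $q$ is just an incidental witness — so all radii of this type are already among the $O(ns)$ distances $\{\text{dist}(c,p) : c \in \mathcal{F},\, p \in \mathcal{B} \cup \mathcal{R}\}$. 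Adding the single radius $0$ from Configuration-0 does not change the bound. Hence $|\mathcal{L}_{\textsc{Dcan}}| = O(ns)$.

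The step I expect to need the most care is the argument that no additional radii arise from configurations where more than two demand points sit on a disk's boundary, or from "transition" radii where the \emph{identity} of the optimal center changes (analogous to the bisector-crossing events in the line case). The point is that because $\mathcal{F}$ is finite, there is no continuous motion of a center along a curve, so the only events that matter are "a demand point enters/leaves a disk centered at some $c \in \mathcal{F}$ as the radius grows," and each such event is exactly a distance $\text{dist}(c,p)$; the convex-position hypothesis on $\mathcal{F}$ is not needed for this lemma (it will be used later for the dynamic-programming step), so I would flag that explicitly. Once this is settled, the bound follows immediately, and I would close by noting that these $O(ns)$ values can be sorted in $O(ns \log(ns))$ time for use in the subsequent algorithm.
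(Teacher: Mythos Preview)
Your proposal is correct and follows essentially the same approach as the paper: both argue that a candidate radius is pinned by a single demand point lying on the boundary of a disk centered at some fixed site $c\in\mathcal{F}$, so $\mathcal{L}_{\textsc{Dcan}}\subseteq\{\operatorname{dist}(c,p):c\in\mathcal{F},\,p\in\mathcal{B}\cup\mathcal{R}\}$ and hence $|\mathcal{L}_{\textsc{Dcan}}|=O(ns)$. Your treatment is in fact a bit more explicit than the paper's---you spell out why the two-point configurations contribute no new radii and correctly note that the convex-position hypothesis is irrelevant here---while the paper additionally remarks (with a figure) that half-distances between pairs of sites in $\mathcal{F}$ never arise as candidate radii, a point your ``no continuous motion of centers'' paragraph covers implicitly.
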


\begin{proof}
Since there will be only a constant number of critical configuration types concerning points $\mathcal{B} \cup \mathcal{R}$ and candidate facilities $\mathcal{F}$, we can consider the following situation where the candidate radius is determined based on a point in $\mathcal{B} \cup \mathcal{R}$ and a candidate facility location (on whose boundary that point lies) in $\mathcal{F}$. The cardinality of the set of radii from this situation is $O(ns)$ (see Figure \ref{DSk4}). 

\begin{figure}[!htb]
	\begin{centering}
		\includegraphics[scale=0.6]{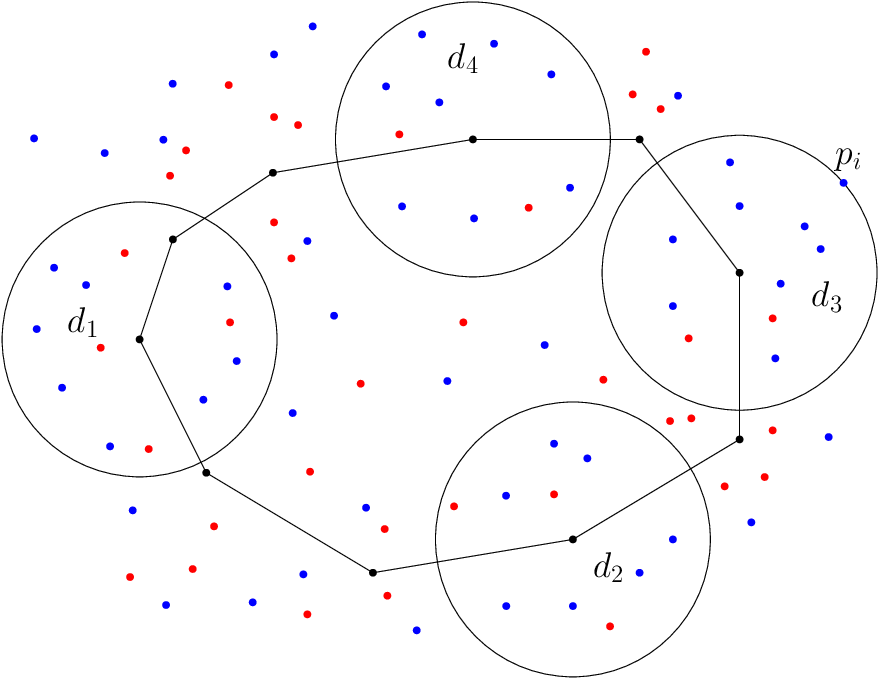}
		\caption{The blue point $p_i$ lying on the boundary of $d_3$ will determine the radius.}
		\label{DSk4}
	\end{centering}
\end{figure}

\begin{figure}[!htb]
	\begin{centering}
		\includegraphics[scale=0.6]{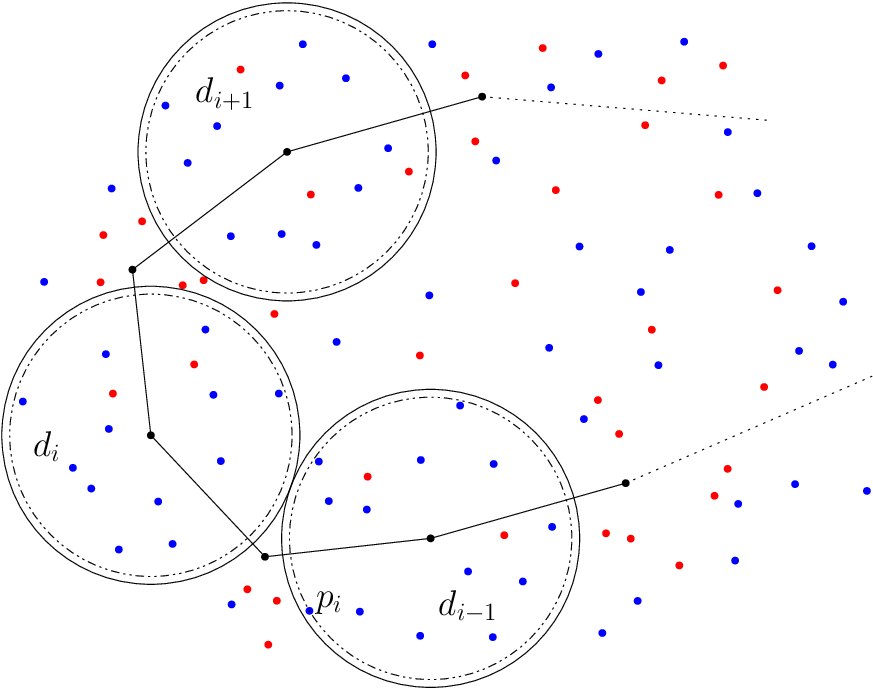}
		\caption{Illustration of the candidate facilities will not determine the radius of disks in the optimal packing.}
		\label{DSs2}
	\end{centering}
\end{figure}

The radius of the disks in the optimal packing cannot be determined solely by the distance between the candidate sites (see Fig \ref{DSs2}). In Figure \ref{DSs2}, we can observe that the closest pair of disks $d_{i-1}$ and $d_{i}$ will never touch in any optimal packing (i.e., the distance between them will not determine the radii of the disks in the optimal packing). Suppose they touch in any optimal packing, then we can reduce the radii of the disks until one of the blue points lie on the boundary of any of the disk (see Figure \ref{DSs2}, $p_i$ lying on the boundary of the disk $d_{i-1}$).
\end{proof}

\subsection{Dynamic programming algorithm}
In this section, first, we show a relationship between the Voronoi diagram of points in an optimal solution and the cost of an optimal solution to the \textsc{Dsofl} problem. We then present a dynamic programming-based solution for the discrete \textsc{Sofl} problem utilizing this property of the Voronoi diagram ($\mathcal{VD}$) of the $k$ sites in an optimal solution. This process is repeated for each $\lambda \in \mathcal{L}_{\textsc{Dcan}}$. The Voronoi diagram of points in convex position forms a tree-like structure except for its infinite edges. If we overlay the diagram with a sufficiently big bounding rectangle, we have the following observation.

\begin{observation}\label{obd}
The Voronoi diagram of points in convex position is a tree.
\end{observation}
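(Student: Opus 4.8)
The plan is to convert $\mathcal{VD}$ into a finite plane graph and then show it has no bounded face, so that Euler's formula forces it to be a tree. Concretely, I would first pick a bounding rectangle $B$ large enough to contain every Voronoi vertex in its interior, truncate each unbounded Voronoi edge at the first point where it meets $\partial B$, and \emph{discard the sides of $B$ themselves}; call the resulting finite graph $T$. Its vertices are the Voronoi vertices together with one new degree-$1$ vertex on $\partial B$ for each truncated ray, and its edges are the (possibly truncated) Voronoi edges. Since the $k$ sites lie in convex position with $k\geq 2$, for $k\geq 3$ they are not all collinear, and it is classical that the Voronoi diagram of $\geq 3$ non-collinear points is connected; hence $\mathcal{VD}$, and therefore $T$, is connected. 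The case $k=2$ is a single perpendicular bisector, trivially a path.

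Next I would pin down the faces of $T$. A standard fact is that a site $p$ has a bounded Voronoi cell if and only if $p$ lies in the interior of the convex hull of the site set; since all sites are hull vertices, no Voronoi cell is bounded, so $\mathcal{VD}$ has no bounded face. The same holds for $T$: a bounded face of $T$ would be enclosed by a cycle of edges of $T$, but a truncated ray cannot belong to a cycle because its outer endpoint has degree $1$, so such a cycle would use only bounded Voronoi edges and would enclose a bounded Voronoi cell, a contradiction. Thus $T$ is a connected plane graph with a single (unbounded) face, and Euler's formula $V-E+F=2$ with $F=1$ gives $E=V-1$; together with connectedness this means $T$ is a tree. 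Undoing the truncation (letting $B$ grow) recovers $\mathcal{VD}$ with its infinite edges playing the role of leaves ``at infinity,'' which is exactly the claimed structure. As an independent cross-check, one can instead use Delaunay duality: for points in general convex position the Delaunay triangulation is a triangulation of a convex $k$-gon, whose dual graph is well known to be a tree ($k-2$ nodes, $k-3$ arcs, connected), and this dual is precisely the bounded part of $\mathcal{VD}$; attaching the $k$ unbounded rays, one per hull edge as a pendant edge, preserves the tree property.

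The main obstacle is not depth but care: making the face-counting step fully rigorous (that truncating the rays with a sufficiently large $B$ neither creates nor destroys bounded faces, and that the only bounded faces are the bounded Voronoi cells) and dealing with degenerate inputs. If four or more sites are cocircular, $\mathcal{VD}$ has a Voronoi vertex of degree $>3$ and the Delaunay faces are not all triangles, but the Euler-formula argument is untouched since it never refers to vertex degrees; collinear sites cannot occur because they would violate strict convex position. I would state the general-position assumption explicitly to keep the write-up short.
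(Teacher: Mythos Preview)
Your proposal is correct and follows precisely the bounding-rectangle idea the paper itself hints at in the sentence preceding the observation; the paper does not actually supply a proof beyond that hint, so you are filling in exactly the argument it omits. Your interpretation---truncate the unbounded rays at $\partial B$ but \emph{discard} the sides of $B$---is the right one (including the rectangle sides would create $k$ bounded faces and destroy the tree property), and both your Euler-formula route and the Delaunay-dual cross-check are sound.
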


Since the points in $\mathcal{F}$ are in convex position, Observation \ref{obd} implies that the $\mathcal{VD}$ of any subset of points in $\mathcal{F}$ is also a tree. Hence, $\mathcal{VD}$ of the optimal $k$ facility sites is a tree. This $\mathcal{VD}$ tree structure allows us to employ dynamic programming. To find this tree or a subtree of it from its rightmost node, we define a subproblem that explores all possible edges from the rightmost node and then further this exploration recursively. This leads to recursively constructing an optimal solution once we guess the rightmost node of the tree. Furthermore, we show no circular dependencies between subproblems.

Without loss of generality, let us consider that the points in ${\cal F}=\{p_1,p_2,\dots,p_s\}$ are ordered clockwise. It is known that the Delaunay triangulation is the dual of the Voronoi diagram. Denote ${\cal DT}$ as the Delaunay triangulation formed by the points corresponding to the Voronoi centers in the Voronoi diagram, ${\cal VD}$. Observe that the smallest edge length of ${\cal DT}$ of points in an optimal solution to \textsc{Dsofl} is at least twice the radius of disks in the optimal solution. 

For a given $\lambda \in \mathcal{L}_{\textsc{Dcan}}$, we precalculate the weight of points covered by a disk with radius $\lambda$ centered at a facility site $f_i\in {\cal F}$ and denote this weight as $w(f_i)$. Then, our dynamic program-based algorithm is as follows. First, we guess the Delaunay triangle corresponding to the rightmost Voronoi node, the three rightmost facility centers, say, $p_i, p_\ell, p_j,$ (where $p_l$ is the rightmost and $p_i$ is below $p_j$) in the optimal solution. We make all possible $s\choose 3$ guesses to find these three optimal centers.
Then, define a subproblem $\Gamma(p_i,p_\ell, p_j, {\cal F'}; {\cal K})$, which corresponds to the maximum (optimal) weight of the points covered by ${\cal K}$ facilities located at some points in $\mathcal{F}$ with a radius of $\lambda$, and the points $p_i$, $p_\ell$ and $p_j$ are the rightmost ordered points in the optimal solution. Initially, we set ${\cal K}=k-3$ and ${\cal F'}={\cal F}\setminus \{p_i,\dots,p_\ell, \dots,p_j\}$, where the indices
$i, j, \ell, \ell'$ are to be read modulo $s$.
Now, consider reconstructing ${\cal VD}$ with three $p_i$, $p_j, p_\ell$ fixed on the right. We do this by determining the corresponding ${\cal DT}$ triangle with its corner points $p_i$, $p_j$ and $p_{\ell'}$. We extend the ${\cal VD}$ by choosing the next point $p_{\ell'}$ that lies left of $ \overrightarrow{p_jp_i}$ such that it is at least $2\lambda$ from $p_i$, $p_j$ and $p_{\ell}$. Observe that $p_{\ell'}$ is outside the circumcircle of $p_i$, $p_j$ and $p_{\ell}$. Then we have the following recurrence,
\vspace{1em}

    \noindent $\Gamma(p_i,p_\ell, p_j, {\cal F'};{\cal K}) = $
\begin{equation*}
\begin{split}
  \max_{\substack{{\cal K}'\leq {\cal K}-1, \\  p_{\ell'}\in {\cal F'} and\\ p_{\ell'}: \zeta(p_{\ell'},\{p_i,p_j,p_\ell\})\geq 2\lambda}}\begin{cases}
     w(p_{\ell'})+\Gamma(p_{\ell'},p_i,p_j, {\cal F''}; {\cal K}')+\Gamma(p_{\ell'},p_j,p_i, {\cal F'''}; {\cal K}-1-{\cal K}') & \text{ if } |{\cal F'}|\geq 1 \\
     0 & \text{ otherwise}
     \end{cases}
\end{split}
\end{equation*}
 \hspace{1em}where $\zeta(p_{\ell'},\{p_i,p_j,p_\ell\})=\min\{dist(p_{\ell'},p_i),dist(p_{\ell'},p_j),dist(p_{\ell'},p_{\ell})\}$, $w(p_{\ell'})$ denotes the total weight of the points that are covered by a disk of radius $\lambda$ centered at $p_{\ell'}$, ${\cal F''}={\cal F'}\setminus \{p_j,\dots,p_{\ell'}\}$ and ${\cal F'''}={\cal F'}\setminus \{p_{\ell'},\dots,p_i\}$.
Base cases are $\Gamma(p_i, p_\ell, p_j, {\cal F'}; 0) = w(p_i)+w(p_{\ell})+w(p_j)$, $\Gamma(p_i, p_\ell, p_j, \emptyset; {\cal K}) = 0$.

\subsubsection*{Proof of Correctness:}
The correctness of the dynamic programming algorithm can be established based on the following observations:
\begin{itemize}
    \item The minimum edge length of ${\cal DT}$ formed by the $k$ sites in the optimal solution is at least $2\lambda$. This ensures that the disks in the optimal solution do not overlap, as the distance between any two points in the solution is greater than or equal to $2\lambda$.
    \item There always exists a solution for a given set of points ${\cal B}\cup {\cal R}$ and ${\cal F}$. If it is impossible to place $k$ disks with a radius of $\lambda$, the algorithm returns a zero weight, indicating that a solution does not exist for given $\lambda$.
    \item By assuming that $p_i$, $p_\ell$, and $p_j$ are the rightmost points in the optimal solution, we have $O(s^3)$ choices for these points. This assumption ensures that a ${\cal DT}$ with $k$ vertices corresponding to the optimal solution always exists if a solution exists for a given $\lambda$.
\end{itemize}
Based on these observations, we can conclude that the dynamic programming algorithm is correct in determining the optimal solution for the given set of points ${\cal B}\cup {\cal R}$ and ${\cal F}$, considering the assumptions made and the properties of the ${\cal DT}$ formed by the candidate sites.

 \begin{theorem}
     Discrete \textsc{Sofl} with candidate facility sites in convex position can be solved in polynomial time. 
 \end{theorem}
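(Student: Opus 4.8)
The plan is to bound the running time of the dynamic programming algorithm just described, by counting subproblems and the cost per subproblem, and then multiplying by $|\mathcal{L}_{\textsc{Dcan}}|$ and the $\binom{s}{3}$ guesses for the rightmost triangle. First I would observe that every subproblem $\Gamma(p_i,p_\ell,p_j,\mathcal{F}';\mathcal{K})$ that actually arises has a very restricted form: the triple $(p_i,p_\ell,p_j)$ consists of three indices in $\{1,\dots,s\}$, the remaining set $\mathcal{F}'$ is always a \emph{contiguous arc} of the convex (clockwise-ordered) point set $\mathcal{F}$ delimited by two of those indices (this is exactly why the convex-position hypothesis and Observation~\ref{obd} are used — splitting at $p_{\ell'}$ produces $\mathcal{F}''$ and $\mathcal{F}'''$ which are again arcs), and $\mathcal{K}\in\{0,1,\dots,k-3\}$. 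Hence $\mathcal{F}'$ is determined by $O(s^2)$ choices of its two endpoints (in fact by $O(s)$ choices once the triple is fixed, but $O(s^2)$ suffices), so the number of distinct reachable subproblems is $O(s^3)\cdot O(s)\cdot O(k)=O(s^4 k)$, or more crudely $O(s^5 k^2)$ if one is generous with the arc bookkeeping and the split parameter $\mathcal{K}'$.

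Next I would bound the work per subproblem. Evaluating the recurrence for one subproblem requires ranging over all admissible next points $p_{\ell'}\in\mathcal{F}'$ (at most $s$ of them), and for each, over the split $\mathcal{K}'\le \mathcal{K}-1$ (at most $k$ values), checking the separation condition $\zeta(p_{\ell'},\{p_i,p_j,p_\ell\})\ge 2\lambda$ in $O(1)$ time and looking up two previously-computed table entries; so each subproblem costs $O(sk)$. Multiplying, the DP for a fixed $\lambda$ costs $O(s^4k)\cdot O(sk)=O(s^5k^2)$, and since by Lemma~\ref{dslemma1} there are $|\mathcal{L}_{\textsc{Dcan}}|=O(ns)$ candidate radii, the total DP time is $O(ns\cdot s^5k^2)=O(ns^6k^2)$; I would also account for the preprocessing of the weights $w(f_i)$ — for each $\lambda$, computing the weight of the disk of radius $\lambda$ at each of the $s$ sites via a range-counting/reporting structure on $\mathcal{B}\cup\mathcal{R}$ costs $O(n^2 + ns)$ or so per radius, contributing $O(n^2 s^2 + n s^2)$ overall (this matches the $O(n^2 s^2 + n s^5 k^2)$ claimed in the abstract once the constants are tightened). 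In any case every factor is polynomial in $n,s,k$, which is all the theorem asserts.

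The key steps, in order, would be: (1) state and justify the invariant that every reachable $\mathcal{F}'$ is a contiguous arc of $\mathcal{F}$ — this follows by induction on the recursion, using that $p_{\ell'}$ lies to the left of $\overrightarrow{p_jp_i}$ and outside the circumcircle of $p_i,p_j,p_\ell$, so the ``clockwise between'' relation used to define $\mathcal{F}''$ and $\mathcal{F}'''$ genuinely partitions an arc into two subarcs; (2) count the reachable subproblems as $O(\mathrm{poly}(s,k))$; (3) bound per-subproblem cost as $O(sk)$; (4) add the per-radius weight-preprocessing cost; (5) multiply by $|\mathcal{L}_{\textsc{Dcan}}|=O(ns)$ from Lemma~\ref{dslemma1}; (6) invoke the correctness argument already given to conclude the returned value (maximized over all $\lambda$ and all $\binom{s}{3}$ rightmost-triangle guesses, taking the smallest $\lambda$ achieving the optimum) is indeed optimal. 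Memoization over the table of reachable subproblems removes any concern about recomputation, and the ``no circular dependencies'' remark is discharged by noting that each recursive call strictly decreases either $|\mathcal{F}'|$ or $\mathcal{K}$.

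The main obstacle I expect is step (1): pinning down precisely that $\mathcal{F}'$ stays an arc and that the two recursive pieces $\mathcal{F}''=\mathcal{F}'\setminus\{p_j,\dots,p_{\ell'}\}$ and $\mathcal{F}'''=\mathcal{F}'\setminus\{p_{\ell'},\dots,p_i\}$ are themselves disjoint arcs whose union (together with $p_{\ell'}$) is $\mathcal{F}'$ — this is geometrically intuitive from the tree structure of the Voronoi diagram of convex-position points but needs the convexity hypothesis and the Delaunay/circumcircle characterization to be made rigorous. Once that invariant is in hand, the polynomial bound is a routine multiplication, and since the theorem only claims ``polynomial time'' (the sharper $O(n^2s^2+ns^5k^2)$ being the abstract's refinement), even a loose arc-counting argument suffices.
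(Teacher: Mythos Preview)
Your proposal is correct and follows essentially the same approach as the paper: count the DP subproblems, bound the per-subproblem work at $O(sk)$, add the per-radius weight preprocessing, and multiply by $|\mathcal{L}_{\textsc{Dcan}}|=O(ns)$ from Lemma~\ref{dslemma1}. The paper is terser---it simply asserts $O(s^3k)$ subproblems without justifying the arc invariant you carefully isolate in step~(1)---so your slightly looser $O(s^4k)$ count costs an extra factor of $s$ in the final bound, but since the theorem only claims polynomial time this is immaterial.
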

 \begin{proof} The running time of the algorithm is calculated as follows:
     \begin{itemize}
         \item From Lemma \ref{dslemma1} we have $|\mathcal{L}_{\textsc{Dcan}}|=O(ns)$.
         \item For each $\lambda\in {\cal L}_{\textsc{tcan}}$ we call the dynamic programming algorithm.
         \item Dynamic programming algorithm for a given $\lambda$:
         \begin{itemize}
             \item For each $f_i\in {\cal F}$, calculating weight of points covered by a disk of radius $\lambda$ centered at $f_i$ will take $(ns)$ time.
             \item There are $O(s^3k)$ subproblems and each subproblem will take $O(sk)$ time.
         \end{itemize}
         \item The total time complexity of the algorithm is $O(n^2s^2+ns^5k^2)$. Additionally, we designate the vertices of ${\cal DT}$ as the optimal solution that yields the maximum weight out of all the invocations of the dynamic programming algorithm with three rightmost points $p_i$, $p_j$, $p_{\ell}$.
     \end{itemize}
 \end{proof}

\section{Conclusion}




This paper studied the problem of locating $k$ semi-obnoxious facilities constrained to a line (\textsc{CSofl}) when the given demand points have positive and negative weights. Specifically, we solved the problem of locating $k$ semi-obnoxious facilities on a line to locate facilities with the maximum weight of the covered demand points in $O(n^4k^2)$ time. Subsequently, we improved the running time to $O(n^3k \cdot\max{(n, k)})$. Furthermore, we addressed two special cases of the problem where points do not have arbitrary weights. We showed that these two special cases can be solved in $O(n^3k\cdot\max{(\log{n}, k)})$ time. For the first case, when $k=1$, we also provide an algorithm that solves the problem in $O(n^3)$ time, and subsequently, we improve this result to $O(n^2\log{n})$. For the latter case, we give $O(n\log{n})$ time algorithm that uses the farthest point Voronoi diagram. We also studied the \textsc{Sofl} for $t$-lines and showed that it can be solved in polynomial time but with a high order degree in $t$. Further, we investigated the complexity of discrete semi-obnoxious facility location (\textsc{DSofl}) for the given candidate locations in convex position, and we showed that this problem can also be solved in polynomial time.  

Following are some of the open problems that are worth considering as future work:
\begin{itemize}
    \item The continuous unrestricted variant of the semi-obnoxious facility location problem: given two sets (red and blue) of demand points with positive and negative weights (respectively) in the plane and an integer $k$. The objective is to maximize the sum of the weights of the points covered by the union of $k$ congruent non-overlapping disks of minimum radius centered anywhere in the plane (i.e., disks (facilities) may be centered anywhere in the plane).

    \item The discrete unrestricted variant of the semi-obnoxious facility location problem: given two sets (red and blue) of points with positive and negative weights (respectively), a set of candidate facility locations in the plane, and an integer $k$. The objective is to maximize the sum of the weights of the points covered by the union of $k$ congruent non-overlapping disks of minimum radius centered at some of the candidate facility locations.
    \item To investigate the scenario when the disks are centered on the boundary of a convex polygon instead of a horizontal line or at vertices of convex polygon.
    \item To investigate the scenario when the disks are restricted to be centered at the grid points of a $t\times t$ grid in the plane. 
    
    \item Finding a better than $O(n^2\log n)$ time algorithm for the \textsc{MaxBlue-NoRed} problem for $k=1$.

\end{itemize}

	\bibliography{refs}
	
	\appendix
	
	
	
\end{document}